\documentclass[reprint,amsfonts, amssymb, amsmath,  showkeys,aps, superscriptaddress, twocolumn,longbibliography,nofootinbib]{revtex4-2}
\usepackage{graphicx} 

\usepackage[dvipsnames]{xcolor}

\usepackage{float}
\usepackage[shortlabels]{enumitem}

\usepackage{braket}
\usepackage{amsthm}
\usepackage{mathtools}
\usepackage{url}
\usepackage{physics}
\usepackage{graphicx}
\usepackage[left=16mm,right=16mm,top=35mm,columnsep=15pt]{geometry} 
\usepackage[T1]{fontenc}
\usepackage{bm}
\usepackage{tabularx}
\newcommand*{\eh}{\mathrm{End\ }\mathcal{H}}
\newcommand*{\Ad}{\mathrm{Ad}}

\def\ad{^{\dagger}}

\def\a{\alpha}

\newcommand{\fsnull}[1]{}
\newcommand{\old}[1]{}

\usepackage[makeroom]{cancel}
\usepackage[toc,page]{appendix}
\definecolor{C1}{RGB}{52, 89, 149}
\definecolor{C2}{RGB}{251, 77, 61}
\definecolor{C3}{RGB}{3, 206, 164}
\definecolor{C4}{RGB}{202, 21, 81}
\usepackage{hyperref}
\hypersetup{colorlinks=true, linkcolor=C4, citecolor=C4, urlcolor=C4}

\usepackage{tikz}
\tikzset{every picture/.style=remember picture}

\usepackage{pgfplots}

\usepackage[utf8]{inputenc}
\usepackage{graphicx}
\usepackage{xcolor}
\usepackage{amsmath}
\usepackage{amsthm}
\usepackage{bm}
\usepackage{bbm}
\usepackage{comment}
\usepackage{appendix}
\usepackage{mathdots}
\usepackage{lipsum}
\usepackage{verbatim}
\usepackage{nccmath}
\usepackage{amsfonts}
\usepackage{thm-restate}
\usepackage{thmtools}
\usepackage{capt-of}
\usepackage{booktabs}

\usepackage{amssymb}
\usepackage{dsfont}

\newcommand{\img}[2][2.2ex]{%
  \mathrel{\vcenter{\hbox{\includegraphics[height=#1]{#2}}}}%
}

\newcommand{\Var}{{\rm Var}}

\renewcommand{\leq}{\leqslant}

\def\endh{\mathrm{End\ }\mch}

\newcommand{\ot}{\otimes}
\newcommand{\ts}{^{\otimes 2}}

\newcommand{\bs}{\textsf{BS}}

\newcommand{\lm}{\lambda }

\newcommand{\sg}{\sigma }

\DeclareMathOperator*{\expect}{\mathbb{E}}

\newcommand{\mcl}{\mathcal{L}}

\newcommand{\mcw}{\mathcal{W}}
\newcommand{\mco}{\mathcal{O}}

\newcommand{\mcc}{\mathcal{C}}

\newcommand{\mch}{\mathcal{H}}
\newcommand{\mcm}{\mathcal{M}}
\newcommand{\mcp}{\mathcal{P}}
\newcommand{\mca}{\mathcal{A}}
\newcommand{\mcd}{\mathcal{D}}
\newcommand{\mce}{\mathcal{E}}

\newcommand{\mct}{\mathcal{T}}
\newcommand{\mbsp}{\mathbb{SP}}
\newcommand{\mbso}{\mathbb{SO}}
\newcommand{\mbsu}{\mathbb{SU}}
\newcommand{\mbo}{\mathbb{O}}
\newcommand{\mbu}{\mathbb{U}}

\newcommand{\mbc}{\mathbb{C}}
\newcommand{\mbr}{\mathbb{R}}

\newcommand{\mbs}{\mathbb{S}}
\newcommand{\mbz}{\mathbb{Z}}
\newcommand{\mbe}{\mathbb{E}}
\newcommand{\mst}{\mathsf{T}}

\def\be{\begin{equation}}
\def\ee{\end{equation}}
\def\bs{\begin{split}}
\def\e{\end{split}}
\def\ba{\begin{eqnarray}}
\def\bea{\begin{eqnarray}}

\def\tea{\end{eqnarray}}
\def\ea{\end{eqnarray}}
\def\eea{\end{eqnarray}}

\def\d{\delta}

\def\a{\alpha}
\def\b{\beta}

\def\d{\delta}
\def\a{\alpha}

\def\b{\beta}

\def\g{\mathfrak{g}}

\def\tn{^\otimes n}
\def\tk{^\otimes k}

\def\a{\alpha}
\def\b{\beta}

\def\tn{^{\otimes n}}

\def\tk{^{\otimes k}}

\newcommand{\id}{\mathds{1}}

\renewcommand{\a}{\alpha}
\renewcommand{\b}{\beta}
\newcommand{\ga}{\gamma}

\renewcommand{\d}{\delta}

\newcommand{\sbraket}[2]{ \langle#1 | #2 \rangle}

\def\sg{\sigma}

\def\g{\gamma}

\def\be{\begin{equation}}
\def\te{\end{equation}}
\def\ee{\end{equation}}
\def\ba{\begin{eqnarray}}
\def\bea{\begin{eqnarray}}

\def\tea{\end{eqnarray}}
\def\ea{\end{eqnarray}}
\def\eea{\end{eqnarray}}

\begin{document}

\title{Classical shadows over symmetric spaces}

\author{Rebecca Chang}
\affiliation{Massachusetts Institute of Technology, Cambridge, Massachusetts 02139, USA}
\affiliation{Theoretical Division, Los Alamos National Laboratory, Los Alamos, New Mexico 87545, USA}

\author{Maureen Krumt{\"u}nger}
\affiliation{School of Physics, University of Melbourne, Parkville, VIC 3010, Australia}

\author{Mart\'{i}n Larocca}
\affiliation{Theoretical Division, Los Alamos National Laboratory, Los Alamos, New Mexico 87545, USA}
\affiliation{Quantum Science Center, Oak Ridge, TN 37931, USA}

\author{Maxwell West}
\affiliation{Theoretical Division, Los Alamos National Laboratory, Los Alamos, New Mexico 87545, USA}
\affiliation{School of Physics, University of Melbourne, Parkville, VIC 3010, Australia}

\begin{abstract}
Efficiently learning expectation values of unknown quantum states via \textit{classical shadows} has become an important primitive in both theoretical and experimental aspects of quantum computation. Typically, classical shadow protocols involve randomised measurements induced by sampling uniformly randomly from a compact group, a situation which is now quite well understood. In this work we go beyond this standard assumption, studying the classical shadow protocols occasioned by sampling uniformly randomly from the so-called \textit{compact symmetric spaces}. We uncover a unifying theory of such protocols, extending the extent to which the general theory of classical shadows is understood at a mathematical level. Interestingly, for the estimation of observables sampled from certain distributions we further find that some of these protocols allow for slight improvements in sample-complexity over existing shadow schemes.  
\end{abstract}

\maketitle

\section{Introduction}
Determining properties of unknown quantum states is both a fundamental task in quantum information theory and of rapidly increasing practical relevance. In recent years, as the surging size of experimentally accessible quantum systems has forced the development of techniques beyond full state tomography, the theory of  \textit{classical shadows}~\cite{huang2020predicting} has emerged as a leading framework within which to perform such learning. In a classical shadows protocol one measures samples of an unknown target  state $\rho$ in randomised bases dictated by a protocol-dependent ensemble $\mce$ of unitaries, and uses the resulting data to reconstruct ``shadows'' of $\rho$, which are in turn used as proxies against which observables can be estimated. For a given target set of observables, the (sample-)efficiency with which this can be done  depends strongly on the selected ensemble of unitaries, with an explosion of research into the various benefits of different ensembles having quickly appeared~\cite{huang2020predicting,west2026classical,zhao2021fermionic,wan2022matchgate,low2022classical,west2025real,hearth2024efficient,chen2021robust,west2024random,bertoni2024shallow,king2024triply,van2022hardware}. 

\begin{figure}
    \centering
    \includegraphics[width=0.98\linewidth]{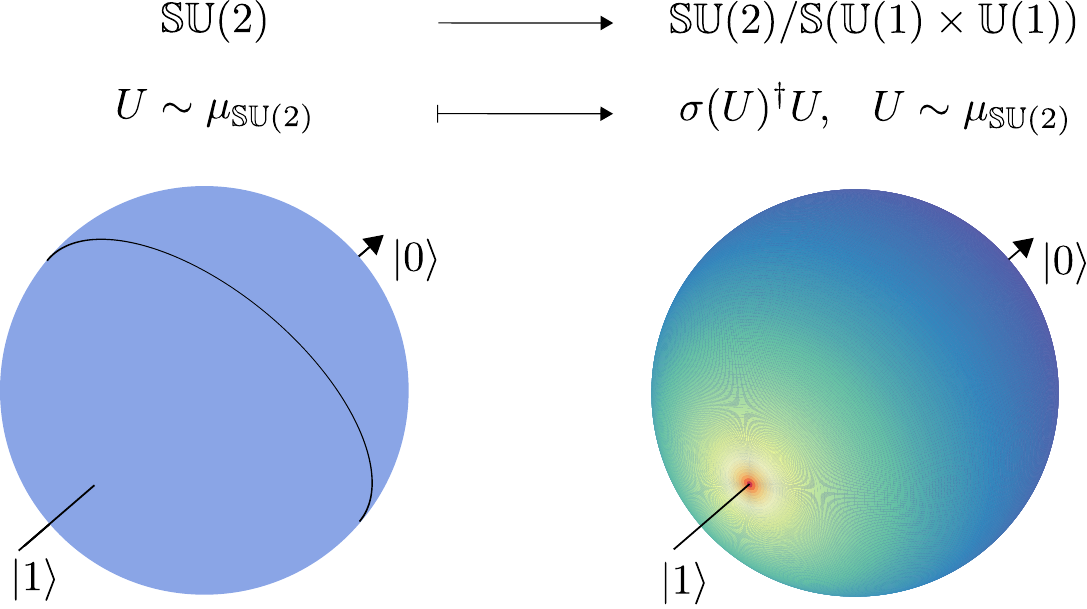}
    \caption{The probability distributions on the Bloch sphere resulting from acting on the state $\ket 0$ with unitaries drawn from the uniform ensembles on $\mbsu(2)$ and the AIII symmetric space $\mbsu(2)/\mbs(\mbu(1)\times\mbu(1))$. Here and more generally, the invariant measure on the symmetric space $G/K$ lifts to a non-invariant measure on the ambient group $G$. }
    \label{fig:1}
\end{figure}

Although the properties of classical shadow protocols are now quite well understood when the ensemble $\mce$ of unitaries from which one is sampling forms a compact group~\cite{west2026classical}, there has been little exploration of the behaviour of classical shadow protocols beyond that assumption~\cite{bertoni2024shallow}. In this work we address this, studying the classical shadow protocols induced by sampling uniformly randomly from the seven infinite families of \textit{compact symmetric spaces of type I}~\cite{knapp1996lie}. Such symmetric spaces can be thought of as a quotient $G/K$, with $K$ a (generically non-normal) subgroup of a compact Lie  group $G$ satisfying various conditions. Symmetric spaces have long been recognized to play a fundamental role in the analysis of symmetry in quantum mechanics~\cite{dyson1962statistical,dyson1962threefold}; more recently, they have found technical applications in quantum compilation~\cite{wierichs2025recursive,khaneja2001cartan,mansky2023near,dagli2008general,kottmann2026parameter}.  As we shall see, the classical shadow protocol induced by sampling uniformly from $G/K$ may be interpreted as the protocol induced by sampling from a certain non-uniform distribution on $G$ (see Fig.~\ref{fig:1}).  Despite the lack of a group structure somewhat complicating the technical analysis, a little calculation allows us to uncover a unifying theory of such protocols. 
Indeed, we find that the classical shadow \textit{measurement channel} associated to $G/K$ is a convex combination of the channel associated to $G$, a dephasing channel, and (when $G$ is the symplectic group) a further subleading term. Essentially, therefore, the classical shadow protocols on $G/K$ resemble those on $G$, but modified to prefer a certain basis. Interestingly, some of these protocols allow for a non-vanishing improvement in sample-complexity over the group-based ensembles for the estimation of observables sampled from distributions centred on the preferred basis, and may therefore find experimental applications.

\begin{table*}
    \centering
    \begin{tabular}{lccccc}
        Type & $G$ &   $K$ &    Involution &  $\a_{G/K}$ &  $|\a_{G/K}-\b_{G/K}|$ \\ \midrule
        AI & $\mbu(d)$  & $\mbo(d)$  & $*$ & $2/(d(d+3))$&0 \\
        AII & $\mbsu(d)$  & $\mbsp(d/2)$  & $\Ad_J \circ *$ & $2/(d(d-1))$&0\\
        AIII & $\mbsu(p+q)$ &  $\mbs(\mbu(p)\times \mbu(q))$  &$\Ad_{I_{p,q}} $& $(s^4  + 2 s^2 (d-2)  + d^2)/(d^2(d-1)(d+3))$&0\\
        \midrule
        BDI & $\mbso(p+q)$ &  $\mbso(p)\times \mbso(q)\ $  & $\Ad_{I_{p,q}} $ &$(  s^{4}  + (6d-4)s^2  + 3 d(d - 2))/(d(d^2-1)(d+6))$&0\\
        DIII & $\mbso(2d)$  & $\mbu(d)$ & $\Ad_{J} $& $3/(d^2-1)$&0 \\
        \midrule
        CI & $\mbsp(d)$  & $\mbu(d)$  & $\Ad_{J} $ & $  3/((d-1)(d+3))$&$\mco(1/d)$\\
        CII & $\mbsp(p+q)$ & $\mbsp(p)\times \mbsp(q)$ &$\Ad_{K_{p,q}} $ & $(4s^4  - 10 s^2  + 3(d/2)^2 + 3d/2)/(d/2(d/2-1)(d-1)(d+3))$&$\mco(1/d)$\\
    \end{tabular}
    \caption{Classification of the  classical compact symmetric spaces of type I.
    Here $G$ is a ``parent'' group with a subgroup $K$  the fixed points of an involution $\sigma:G\to G$, and $G/K$ the corresponding symmetric space. $J$ is the canonical symplectic form, $I_{p,q}=\id_p \oplus (-\id_q)$, and $K_{p,q} = I_{p,q}^{\oplus 2}$.  The classical shadow channel induced by sampling uniformly from $G/K$  and measuring in the basis $\mcw$ with respect to which the involutions take their standard forms (as given in the Table) is a convex combination $\mcm_{G/K,\mcw}(\rho) = (1-\a_{G/K})\mcm_{G,\mcw}(\rho) + \b_{G/K}\mca_{\mcw}(\rho)+(\a_{G/K}-\b_{G/K})\left(J\mca_\mcw(\rho)J\ad - \mca_\mcw(\rho J)J\right)$ of the channel corresponding to sampling uniformly from $G$, a dephasing channel, and (in the symplectic case) a subleading term involving the symplectic form. For AIII, BDI and CII, $s$ is the signature of the matrix whose adjoint action defines the involution. 
    \label{tab:symm_classif}}
\end{table*}

\section{Preliminaries}\label{sec:prelim}
We begin by briefly recalling the key points of classical shadows; more details may be found in Appendix~\ref{sec:cs} and Refs.~\cite{huang2020predicting,west2026classical}. A classical shadow protocol $(\mce,\mcw)$ is defined by two choices: an ensemble $\mce$ of unitaries acting on the Hilbert space $\mch\cong\mbc^d$ of the unknown state $\rho$, and a measurement basis $\mcw=\{\ket{w}\}_w$ of $\mch$. Having made these choices, one takes copies of one's unknown state, acts with a randomly drawn $U\sim\mce$, and measures in the basis $\mcw$. Upon obtaining the outcome (say) $\ket w$, one classically stores the state $U\ad \ketbra{w}{w} U$. The result of all this is to effect the \textit{measurement channel}\footnote{Note that this is just $\expect_{g\sim \mce} \, g\cdot \mca_\mcw$, where $\cdot$ is the natural action of $g\in\mce$ on $\rm{End}(\rm{End}(\mch))$}
\begin{equation}\label{eq:mc}
\mcm_{\mce,\mcw} = \expect_{U\sim \mce} {\rm Ad}_{U^\dagger}\circ \mca_\mcw \circ {\rm Ad}_{U},
\end{equation}
where $\mca_\mcw(-)=\sum_w \bra{w}(-)\ket{w}\ketbra{w}$ is a dephasing channel with respect to $\mcw$ and $\Ad_U(-) = U(-)U\ad$ the adjoint action of $U$. The (pseudo-)inverse of $\mcm_{\mce,\mcw}$ on the states generated in this way from a collection of $N$ copies of $\rho$ forms a set $\{\mcm_{\mce,\mcw}^{-1}(U_i\ad \ketbra{w_i}{w_i} U_i)\}_{i=1}^N$ of \textit{classical shadows} of $\rho$ which one uses as proxies $\{\hat{\rho}_i\}_{i=1}^N$ against which to evaluate expectation values of observables $O$,
\begin{equation}\label{eq:emp}
    \tr[\rho O] \approx \frac{1}{N}\sum_{i=1}^N \tr[\hat{\rho}_i O].
\end{equation}
Two natural questions arise: (i) are the estimators of Eq.~\eqref{eq:emp} unbiased, and (ii) how many shadows does one need to (with a certain probability) obtain estimates to a given target precision? We omit a detailed discussion of these well-studied questions here (but see Appendix~\ref{sec:cs} or e.g. Refs.~\cite{huang2020predicting,west2026classical}); suffice it to say that their answers depend on the set of observables which one is interested in measuring, and that (conditioned on the selection of observables) different choices of $\mce$ and $\mcw$ can lead to protocols of varying favorability. For example, sampling from the uniform distribution on the matchgate group~\cite{braccia2025optimal} allows for the efficient estimation of low-degree fermionic observables~\cite{wan2022matchgate,zhao2021fermionic,low2022classical}, and uniformly sampling local Cliffords leads to the efficient sampling of operators with constant support~\cite{huang2020predicting}. As previously stated, the point of this work is to investigate the effects of taking $\mce$ to be the uniform measure on a classical compact symmetric space, some relevant properties of which we now review (see Appendix~\ref{sec:ss} for more details).  

One can speak of symmetric spaces at various levels of generality and from various points of view~\cite{cartan1926sur,gorodski2021introduction,magnea2002introduction,zirnbauer2010symmetry}; for our purposes it will be sufficient to make the
\begin{restatable}{defn}{ss}
    A classical compact symmetric space (of type I) is a quotient $G/K$, where $G$ is the unitary, orthogonal or (unitary) symplectic group, and $K$ is  the fixed point set of an involution $\sg:G\to G$.
\end{restatable}
For example, take $G=\mbu(d)$ with the involution $\sg(U)=U^*$. Evidently the fixed point set of $\sg$ is the orthogonal group, and we meet the so-called \textit{AI symmetric space}, ${\rm AI}\equiv\mbu(d)/\mbo(d)$. Remarkably, all of the possible spaces of this form were classified by Cartan~\cite{cartan1926sur} in 1926; there are seven infinite families, tabulated in Table~\ref{tab:symm_classif}. Importantly for our purposes, the uniform measure on a compact symmetric space of the above form has a simple algorithmic description~\cite{duenez2004random,matsumoto2013weingarten}; a random $t\sim G/K$ is given by
$\sg(g)^{-1}g$, where $g$ is drawn uniformly randomly from the ambient group $G$. For example,  a random element of AI can be produced by generating a random $U\in\mbu(d)$ and then ``returning'' $\sg_{\rm AI}(U)\ad U = (U^*)\ad U = U^\mst U$.  What we will in effect  be doing, then, is studying the classical shadow protocols induced by certain non-invariant measures on the unitary, orthogonal, and unitary\footnote{We will for conventional brevity henceforth drop this qualifier, referring to the unitary symplectic group simply as the symplectic group} symplectic groups (see Fig.~\ref{fig:1}).

\section{Results}
As alluded to in the introduction, the application of tools from representation theory has allowed the classical shadow protocols resulting from sampling uniformly from a compact group $G$ to become quite well understood~\cite{west2026classical}. For example, an elementary but powerful observation is that in this setting the measurement channel Eq.~\eqref{eq:mc} is manifestly $G$-equivariant: for any $g\in G,\ \mcm_{G,\mcw}\circ {\rm Ad}_g = {\rm Ad}_g \circ\mcm_{G,\mcw} $. It immediately follows from Schur's lemma~\cite{fulton1991representation} that one has the decomposition
\begin{equation}
    \mcm_{G,\mcw} \cong \bigoplus_{\lm} \id_{d_\lm}\ot s_\lm\,,
\end{equation}
where the direct sum is over the irreps of $G$ in $\eh$. Here $d_\lm$ is the dimension of the irrep $\lm$ (which appears in the decomposition with some multiplicity $m_\lm$) and the $s_\lm \in {\rm End\ }(\mbc^{m_\lm})$ are to-be-determined operators. 

In contrast to this excellent group-structure-induced behaviour, we show in Appendix~\ref{sec:cs} that when we take $\mce$ to be the uniform measure on a symmetric space it is \textit{not} the case that the resulting measurement channel commutes with the action of an arbitrary coset representative in $G/K$\footnote{Where we recall that the $K$ in $G/K$ is generally not a normal subgroup of $G$}; a somewhat weaker result, however, is true.
To see it, let us say a group $H$ is $\mcw$\textit{-normalizing} if, for all $h\in H$ and $\ket{w}\in\mcw$, there exists a $\ket{w'}\in\mcw$ such that $h\ketbra{w}h\ad=\ketbra{w'}$. We will use $N_\mcw$ to denote the group of all $\mcw$-normalizing elements of $G$.  We find: 
\begin{restatable}{lem}{lemstab}\label{lem:stab}
    The classical shadows measurement channel $\mcm_{G/K,\mcw}$ resulting from sampling uniformly from $G/K$ and measuring in a basis $\mcw$ commutes with the adjoint action of any group $H\subseteq K\cap  N _\mcw$, so that
    \begin{equation}\label{eq:symstruct}
    \mcm_{G/K,\mcw} = \bigoplus_{\lm} \id_{d_\lm}\ot s_\lm;
\end{equation}
where the direct sum is over the irreps of $H$ in $\eh$. 
\end{restatable}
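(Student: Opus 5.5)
We plan to prove the commutation relation $\Ad_{h}\circ\mcm_{G/K,\mcw}\circ\Ad_{h^{-1}}=\mcm_{G/K,\mcw}$ for every $h\in H\subseteq K\cap N_\mcw$ directly from the sampling description of the uniform measure on $G/K$. By that description, $\mcm_{G/K,\mcw}=\expect_{g\sim G}\Ad_{t(g)\ad}\circ\mca_\mcw\circ\Ad_{t(g)}$ with $t(g)=\sg(g)^{-1}g$ and $g$ Haar-distributed on $G$. Throughout we use $\Ad_A\circ\Ad_B=\Ad_{AB}$ and $\Ad_{U\ad}=\Ad_U^{-1}$.

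\textbf{Step 1: dephasing is invariant under $\mcw$-normalizing elements.} Take $h\in N_\mcw$. Then $h$ permutes the rank-one projectors $\ketbra{w}$, so $h\ket{w}=e^{i\phi_w}\ket{\pi(w)}$ for some permutation $\pi$ of $\mcw$ and phases $\phi_w$. We check that $\Ad_{h}\circ\mca_\mcw\circ\Ad_{h\ad}=\mca_\mcw$. Indeed,
\[
\Ad_h\mca_\mcw(h\ad X h)=\sum_w \bra{w}h\ad Xh\ket{w}\, h\ketbra{w}h\ad=\sum_w\bra{\pi(w)}X\ket{\pi(w)}\ketbra{\pi(w)}=\mca_\mcw(X).
\]
The phases cancel in both factors, and the last equality is just a reindexing of the sum by $\pi$.

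\textbf{Step 2: conjugating $t(g)$ by $h\in K$ is a change of Haar variable.} Since $\sg(h)=h$ and $\sg$ is a homomorphism, $\sg(hg)=h\,\sg(g)$. Hence $t(gh^{-1})=h\,\sg(g)^{-1}g\,h^{-1}=h\,t(g)\,h^{-1}$. We then compute
\[
\Ad_h\circ\mcm_{G/K,\mcw}\circ\Ad_{h^{-1}}=\expect_g \Ad_{h t(g)\ad}\circ\mca_\mcw\circ\Ad_{t(g)h^{-1}}.
\]
Insert $\Ad_{h^{-1}}\circ\Ad_h=\id$ on either side of $\mca_\mcw$. This gives $\Ad_{(ht(g)h^{-1})\ad}\circ\bigl(\Ad_h\circ\mca_\mcw\circ\Ad_{h^{-1}}\bigr)\circ\Ad_{ht(g)h^{-1}}$. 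By Step 1, and since $h^{-1}\in N_\mcw$ because $N_\mcw$ is a group, the middle factor is $\mca_\mcw$. The remaining expression is $\expect_g\Ad_{t(gh^{-1})\ad}\circ\mca_\mcw\circ\Ad_{t(gh^{-1})}$. Right-invariance of Haar measure, $g\mapsto gh^{-1}$, turns this back into $\mcm_{G/K,\mcw}$.

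The only delicate point is bookkeeping. We must ensure that the same $h$ appears on both sides, so that $h$ belongs to $K$ (needed for the Haar reparametrisation) and to $N_\mcw$ (needed for the dephasing invariance). This is exactly why we restrict to $K\cap N_\mcw$. If the paper's $\mbsu$ or $\mbsp$ conventions make $\sg$ an antihomomorphism composed with inversion, we would recheck the identity $t(gh^{-1})=h\,t(g)\,h^{-1}$, but for involutive automorphisms it holds as written.

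\textbf{Step 3: block decomposition.} Restrict the adjoint representation to $H$ and decompose $\eh\cong\bigoplus_\lm V_\lm\ot\mbc^{m_\lm}$ into $H$-irreps. By Steps 1–2, $\mcm_{G/K,\mcw}$ commutes with every $\Ad_h$, $h\in H$, so it is an $H$-intertwiner. Schur's lemma then gives Eq.~\eqref{eq:symstruct}, $\mcm_{G/K,\mcw}=\bigoplus_\lm\id_{d_\lm}\ot s_\lm$, exactly as in the group case. If $H$ is not compact, we apply Schur to its closure, which is still contained in $K\cap N_\mcw$ since both are closed.
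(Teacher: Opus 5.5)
Your proof is correct and is essentially the paper's argument. The paper uses right-invariance of the Haar measure under $g\mapsto gk^\dagger$, together with $\sigma(k)=k$, to get $\mcm_{G/K,\mcw}\circ\Ad_k=\Ad_k\circ\mcm_{G/K,\Ad_{k^\dagger}(\mcw)}$; it then observes that a $\mcw$-normalizing $k$ leaves the measurement basis, and hence the channel, unchanged, and finishes with Schur's lemma. Your Steps 1--3 are exactly these ingredients, with the dephasing invariance separated out from the Haar change of variables. Your closing remark about compactness is harmless but unnecessary: $\Ad_h$ is unitary with respect to the Hilbert--Schmidt inner product, so the restricted representation is completely reducible for any subgroup $H$.
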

As we shall soon see, we further find that for all our examples the decomposition of Eq.~\eqref{eq:symstruct} is \textit{multiplicity-free}; that is, $s_\lm\in\mbr\ \forall \lm$. This allows one to trivially invert the measurement channels, avoiding the \textit{a priori} need for extensive classical post processing~\cite{west2026classical}. These calculations are facilitated by the simple observation that the measurement channel Eq.~\eqref{eq:mc} can be re-expressed as 
\begin{equation}\label{eq:mc2}
\mcm_{\mce,\mcw}(\rho)  = \tr_1\left[\left(\rho \otimes \id\right) \mct^{(2)}_{\mce} \left( \sum_{w\in\mcw} \ketbra{w}\ts \right)\right],
\end{equation}
where $\mct^{(k)}_{ \mce}: \mcl\tk\to\mcl\tk, \mct^{(k)}(A)= \expect_{U\sim \mce}U\tk A (U\ad)\tk$.
Our immediate task, then,  is an exercise in evaluating second-order twirls over symmetric spaces. 
\begin{figure*}[t]
\includegraphics[width=\linewidth]{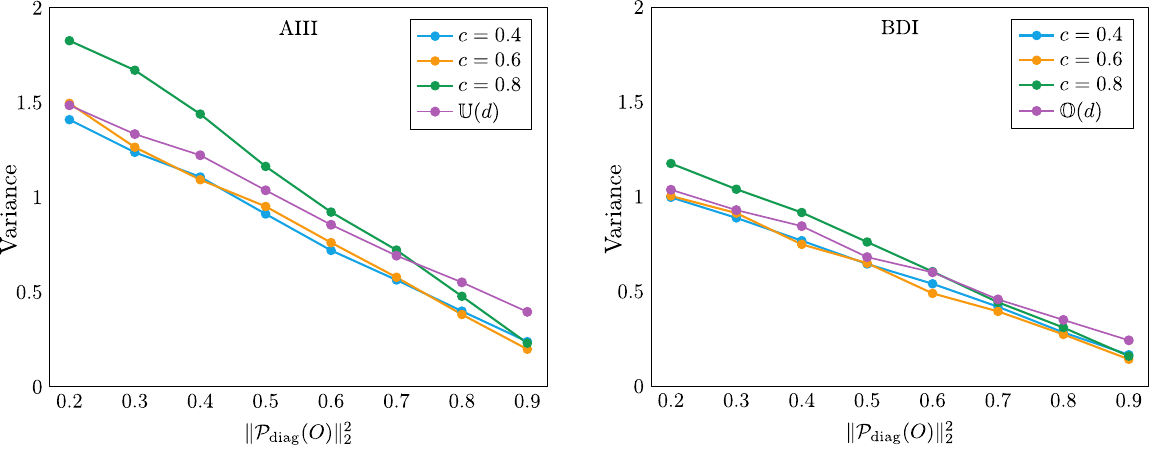}
    \caption{Variances for random (symmetric) observables of unit 2-norm on a $d=128$-dimensional system, averaged over 100 instances. Here $c=(p-q)/d$ is the (normalized) signature of $I_{p,q}$; each choice of $c$ identifies distinct AIII and BDI protocols. When the random observables are strongly concentrated on the diagonal (with respect to the measurement basis $\mcw$), AIII and BDI shadows narrowly outperform the standard unitary and orthogonal protocols, respectively. The orthogonal protocols having a variance  lower than their unitary counterparts by a constant factor (of order 2) is expected for symmetric observables~\cite{west2025real}.   }
    \label{fig:var}
\end{figure*}
\noindent
It follows from the discussion of Section~\ref{sec:prelim} that one can express these integrals as \textit{fourth}-order integrals over the ambient group, thus reducing the task to  standard Weingarten calculus~\cite{mele2023introduction} on the unitary, orthogonal and symplectic groups~\cite{collins2006integration}. 
Alternately, it was shown by Matsumoto~\cite{matsumoto2013weingarten} that one can integrate over the symmetric spaces directly, without explicitly passing to the parent group; we explore both of these options in the Appendices. 
In any event, we find:
\begin{restatable}{thm}{thmmain}\label{thm:thmmain}
    Sampling uniformly from $G/K$ and measuring in the basis $\mcw$ with respect to which the involutions take their standard forms yields
    \begin{align}
        \mcm_{G/K,\mcw}(\rho) &= (1-\a_{G/K})\mcm_{G,\mcw}(\rho) + \b_{G/K}\mca_{\mcw}(\rho)\nonumber\\ 
        &\hspace{1mm}+(\a_{G/K}-\b_{G/K})\left(J\mca_\mcw(\rho)J\ad - \mca_\mcw(\rho J)J\right),\label{eq:thm1}
    \end{align}
    where $\mca_{\mcw}$ is a dephasing channel onto $\mcw$ and $J$ is the canonical symplectic form.
    For AI, AII, CI, and DIII, $\a_{G/K}\in\mco(d^{-2})$; for AIII, BDI and CII, $0\le \a_{G/K} \le 1$ is tunable. Finally, $\a_{G/K}=\b_{G/K}$ for $G\neq \mbsp$, and $|\a_{G/K}-\b_{G/K}|\in\mco(1/d)$ for $G=\mbsp$. 
\end{restatable}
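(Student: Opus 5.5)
The plan is to compute the second-order twirl $\mct^{(2)}_{G/K}$ appearing in Eq.~\eqref{eq:mc2}, using the sampling rule $t=\sg(g)^{-1}g$ with $g$ Haar on $G$. The structural input is Lemma~\ref{lem:stab}. In the standard basis $\mcw$, the group $H=K\cap N_\mcw$ contains the diagonal phase/sign matrices and the permutations compatible with the involution: all of $S_d$ for AI, AII and DIII; $S_p\times S_q$ for AIII, BDI and CII; and the symplectic-compatible permutations for CI and CII. Hence $\mcm_{G/K,\mcw}$ commutes with $\Ad_h$ for all $h\in H$. Combined with the fact that $\mcm$ is unital and trace preserving, this restricts $\mcm$ to a span of a few maps. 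These are the identity channel, the map $\rho\mapsto\tr(\rho)\id/d$, the dephasing $\mca_\mcw$, transpose-type maps where present (real and symplectic cases), and the $J$-twisted maps $J\mca_\mcw(\rho)J\ad$ and $\mca_\mcw(\rho J)J$ when $G=\mbsp$. Since $\mcm_{G,\mcw}$ is already a combination of the non-dephasing maps, it suffices to determine the coefficients.

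To fix them I will evaluate $\mcm_{G/K,\mcw}$ on a small set of test operators. I will take a diagonal $\ketbra{w}$, an off-diagonal $\ket{w}\bra{w'}$, and, in the symplectic case, an operator pairing $w$ with $Jw$. Each evaluation reduces to matrix elements of $\expect_t (t\ot t)\ketbra{w}\ts(t\ad\ot t\ad)$. Substituting $t=\sg(g)^{-1}g$ makes these degree-$(4,4)$ moments of $g$, for example $\expect|\bra{w'}U^\mst U\ket{w}|^4$ for AI. I will compute them with the Weingarten calculus on $\mbu$, $\mbo$ and $\mbsp$ from Ref.~\cite{collins2006integration}. Alternatively, I will apply Matsumoto's symmetric-space Weingarten functions~\cite{matsumoto2013weingarten} directly at order two, which is much shorter and is the route I would try first.

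Matching coefficients against $\mcm_{G,\mcw}$ should produce the form in Eq.~\eqref{eq:thm1} and yield $\a_{G/K}$ and $\b_{G/K}$. The AIII, BDI and CII cases depend on $p,q$ only through traces of $I_{p,q}$ and its powers. Because $I_{p,q}^2=\id$, these reduce to $d$ and the signature $s=p-q$, which gives the rational functions of $s$ in Table~\ref{tab:symm_classif}.

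The remaining claims then follow from those formulas. For AI, AII, CI and DIII, $\a_{G/K}=\mco(d^{-2})$ is read off directly. For the tunable families, evaluating at $s=\pm d$ (where $K=G$ and $t=\id$) gives $\a_{G/K}=1$, so $\mcm=\mca_\mcw$. At $s\approx 0$ one gets $\a_{G/K}=\mco(d^{-2})$. Intermediate values follow by continuity in $s$, and checking nonnegativity of the numerator on $|s|\le d$ gives $0\le\a_{G/K}\le1$. Finally, $\a_{G/K}=\b_{G/K}$ off the symplectic case holds because the $J$-terms do not exist there and the channel is unital, so the coefficients must match. For $\mbsp$, the difference $\a_{G/K}-\b_{G/K}$ is computed explicitly and is $\mco(1/d)$.

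I expect the main obstacle to be the symplectic cases CI and CII. There the Weingarten functions are over the hyperoctahedral/Brauer structure with signs, and the commutant of $H$ is larger, which is why the extra $J$-terms appear. Keeping track of the sign conventions of $J$, and verifying that no further independent invariant map survives, is the delicate bookkeeping. I would cross-check it numerically by sampling for small $d$.
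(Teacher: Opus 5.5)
\textbf{There is a genuine gap in the structural reduction.} Lemma~\ref{lem:stab} only gives equivariance under $H=K\cap N_\mcw$. For several families the commutant of $\Ad_H$ is strictly larger than the span you write down.

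For AIII, BDI and CII, $H$ consists of block-diagonal monomial matrices. So $\rho\mapsto I_{p,q}\rho I_{p,q}$, $\rho\mapsto\tr(I_{p,q}\rho)\id$, $\rho\mapsto\tr(\rho)I_{p,q}$ and $\rho\mapsto I_{p,q}\mca_\mcw(\rho)$ all commute with $\Ad_H$. Some combinations of these (e.g.\ $\rho\mapsto I_{p,q}\rho I_{p,q}$ itself) are unital and trace preserving, so those two constraints do not remove them. Likewise, for AII and DIII, $K\cap N_\mcw$ only permutes symplectic pairs; it is not all of $S_d$, contrary to what you state. Consequently $\rho\mapsto J\mca_\mcw(\rho)J\ad$ commutes with it even though $G\neq\mbsp$.

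Your test operators cannot detect such terms. For example, $I_{p,q}\rho I_{p,q}$ agrees with the identity map on $\ket{w}\bra{w'}$ whenever $w,w'$ lie in the same block. So as written you have not established Eq.~\eqref{eq:thm1}. In particular you have not shown the central claim that the AIII/BDI/CII channels depend on $(p,q)$ only through one scalar $\alpha_{G/K}(s)$.

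\textbf{What is missing.} The maps you list are in fact (up to bookkeeping) the commutant of $\Ad_{G\cap N_\mcw}$. Equivariance under this larger group is true, but it needs an extra idea beyond Lemma~\ref{lem:stab}.
\begin{itemize}
\item For AIII, BDI and CII, $I_{p,q}$ (resp.\ $K_{p,q}$) is diagonal in $\mcw$, so it commutes with every $\Pi_w$. Hence $\mcm_{G/K,\mcw}$ equals the channel generated by $W=I_{p,q}t=g^{-1}I_{p,q}g$, whose law is invariant under conjugation by all of $G$.
\item For CI and DIII, pass to $\tilde t=Jt=g^{-1}Jg$ and use $J\Pi_wJ\ad=\Pi_{w^\sharp}$.
\item For AII, $\tilde t=g^\mst J g$ is invariant under the congruence $\tilde t\mapsto u^\mst\tilde t u$ for $u\in G$.
\item For AI, the congruence invariance of $t=U^\mst U$ under $u\in\mbu(d)$ supplies all diagonal phases, which is what kills the transpose term.
\end{itemize}

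\textbf{How the paper gets this.} It obtains the same effect implicitly by expanding $\sum_w\expect[\overline{v_{wb}v_{wi}}v_{wa}v_{wj}]$ with Matsumoto's formulas. Their delta tensors carry no $(p,q)$ dependence, so the sum collapses onto $\delta_{ab}\delta_{ij}$, $\delta_{ai}\delta_{bj}$ and $\delta_{abij}$, plus $\delta_{aj}\delta_{bi}$ or $J$-contractions for $\mbo$ and $\mbsp$. This is a structural use of those formulas, not just their use to evaluate a few entries. The $b\leftrightarrow i$ exchange symmetry then equates the identity and trace coefficients and kills the $J_{aj}J_{bi}$ term. Trace preservation leaves one coefficient (two for $\mbsp$) to compute.

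With that input added, your test-operator/Weingarten plan goes through. A minor point: positivity of the numerator only gives $\alpha_{G/K}\ge 0$. The bound $\alpha_{G/K}\le 1$ needs a separate check, e.g.\ monotonicity in $s^2$ up to the value $1$ at $s=\pm d$.
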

Let us make several comments.  
Firstly, it is somewhat gratifying that, despite the lack of a group structure, the shadow channel over any of these symmetric spaces may be written in terms of only two \textit{a priori} unknown parameters (and indeed for $G\neq \mbsp$ we have $\a_{G/K}=\b_{G/K}$, so that there is only one unknown parameter). This observation obviates the need for performing the full calculation of Eq.~\eqref{eq:mc2} explicitly, as one can fix the unknown coefficients by looking at some specific matrix entries. This is a non-trivial simplification; for example, naively evaluating Eq.~\eqref{eq:mc2} over $G/K$ for $G=\mbo,\mbsp$ by performing the fourth-order twirl over $G$ leads to an expression with $((2\cdot 4-1)!!)^2 = 11025$ terms, which must then be symbolically summed over $w$. While this is not impossible, and we indeed provide the results of such calculations in the appendices, Theorem~\ref{thm:thmmain} provides us with an easier route.

Secondly,  from their respective coefficients $\a_{G/K}$ being all of order $d^{-2}$, it follows immediately from Theorem~\ref{thm:thmmain} that the classical shadow protocols induced by AI, AII, CI, and DIII are unlikely to be of practical interest, except insofar as they inherit the benefits of the known protocols over the unitary~\cite{huang2020predicting}, orthogonal~\cite{west2025real}, and symplectic~\cite{west2024random} groups  which they approximately reproduce.

Separately, and though not entirely immediate, it follows from the twin facts that the symplectic group induces unitary state $k$-designs for all $k$~\cite{west2024random} and (as we find in Appendix~\ref{sec:cs}) that $\a_{\rm CII}-\b_{\rm CII}\in\mco(1/d)$  that the AIII and CII protocols coincide to leading order. 
The prospect of finding something particularly new, therefore, rests with the cases of AIII and BDI.

To that end, our first observation is that Eq.~\eqref{eq:thm1} implies that (excepting the degenerate case $\a_{G/K}=1$) all of the measurement channels inherit the images of the channels on the parent group, and are therefore invertible exactly when those channels are. This implies that the set of observables for which they return unbiased estimates does not change~\cite{van2022hardware}. As alluded to in Section~\ref{sec:prelim}, then, the only remaining question is the sample efficiency with which this can be achieved. This reduces to a question of the variance of the resulting estimators (see Appendix~\ref{sec:cs} and Refs.~\cite{huang2020predicting,west2026classical}), with lower variances leading to improved sample complexities. In the cases of sampling uniformly from the unitary, orthogonal and symplectic groups, it has been shown in Refs.~\cite{huang2020predicting,west2025real,west2024random} respectively that the variance is (essentially) controlled by the 2-norm of the observable to be estimated. While the precise expressions for the variances are in the symmetric space case quite unwieldy\footnote{For example in the BDI case the exact expression involves $(11!!)^2=108,056,025$ terms} (see Appendix~\ref{sec:cs}) one can intuit from Eq.~\eqref{eq:thm1} a dependence on the 2-norm of the projection of the observable onto the space of operators which are diagonal in the basis of the dephasing channel; indeed, we find this behaviour numerically in Fig.~\ref{fig:var} (see Appendix~\ref{sec:cs} for an analytic bound). For observables with non-negligible concentration on the diagonal subspace (again, with respect to the basis of the dephasing channel) we further find that the AIII and BDI protocols offer modest improvements over the unitary and orthogonal protocols respectively.

\section{Discussion}
The development of classical shadow protocols employing novel unitary ensembles has become a popular line of research, with many instances enjoying various advantages now known~\cite{huang2020predicting,zhao2021fermionic,wan2022matchgate,low2022classical,west2025real,hearth2024efficient,chen2021robust,west2024random,bertoni2024shallow,king2024triply,west2026classical,van2022hardware}. Given the results presented here, we are now in the satisfying situation of the protocols induced by both the (defining representations of the) classical compact groups~\cite{huang2020predicting,west2025real,west2024random} and symmetric spaces being understood. 

One important aspect of the shadows pipeline that we have left hitherto undiscussed, however, is the circuit complexity of actually sampling from our ensembles. As is well-known~\cite{huang2020predicting} (and as we see the appendices) it is actually somewhat overkill to be sampling exactly from the uniform measures on the symmetric spaces; indeed, any 3-design over those measures will do. As we have further seen that the moments of these ensembles can be written in terms of doubled-order moments of the parent group, it follows that it suffices for our purposes to sample variously from 6-designs over the unitary, orthogonal and symplectic groups. Counterintuitively, this can be done in logarithmic depth for the unitary group~\cite{schuster2024random,laracuente2024approximate}, but not\footnote{At least, when restricting to unitary ensembles that themselves live within the respective groups} in sublinear depth for the orthogonal or symplectic groups~\cite{west2025no,grevink2025will}.

Throughout this work, we have let the symmetric spaces act on the Hilbert space of the unknown state in the manner conferred by the defining representations of the unitary, orthogonal and symplectic groups; a natural extension would be to consider the protocols induced by arbitrary representations. A particularly natural example of this would be to consider DIII with $\mbso$ acting as the group of matchgate circuits; the quotient $\mbso(2n)/\mbu(n)$ is the manifold of inequivalent pure Gaussian states~\cite{hackl2021bosonic}. In this more general setting (a direct generalisation of) Lemma~\ref{lem:stab} would still hold, in principle allowing for at least partial understanding of the measurement channel with only minimal explicit calculation.
The primary shortcoming of this approach (which indeed exists already in the case of the defining representations, whence the necessitation of the detailed calculations of the appendices)  at its current level of development is its inability to directly identify the operators $s_\lm$ of Eq.~\eqref{eq:symstruct}. This is a situation which has been (partially) resolved in the case of shadows over a compact group; modulo mild requirements on the measurement basis, a simple analytical expression has been found for the corresponding coefficients, obviating the need for explicit Weingarten calculus~\cite{west2026classical}. The discovery of a similar expression in the symmetric space case would thus be very interesting. 

\section{Acknowledgments}
We gratefully acknowledge useful discussions with Nathan Killoran. 
This work was supported by Laboratory Directed Research and Development (LDRD) program of Los Alamos National Laboratory (LANL) under project number 20230049DR.
RC was supported by the U.S. Department of Energy through a quantum computing program sponsored by the Los Alamos National Laboratory Information Science \& Technology Institute.
MK acknowledges support from the Melbourne Research Scholarship.
ML acknowledges support from LANL's ASC Beyond Moore’s Law project and the Quantum Science Center (QSC), a national quantum information science research center of the U.S. Department of Energy (DOE).
MW acknowledges the support of the Australian government research training program scholarship, the Dr. Albert Shimmins Fund,  and the IBM Quantum Hub  at the University of Melbourne. 

\bibliography{refs,quantum}

\onecolumngrid
\appendix

\section{Integration over symmetric spaces}\label{sec:ss}
In this appendix we discuss the basic details of integrating over (classical, compact, type I) symmetric spaces; more details may be found in Refs.~\cite{duenez2004random,matsumoto2013weingarten}. We recall that such a symmetric space may be explicitly realised as a homogeneous space $G/K$, where $G$ is one of the matrix Lie groups $\mbu(d)$, $\mbso(d)$, or $\mbsp(d)$, and $K \subset G$ is  the fixed–point set of an involutive automorphism $\sigma:G\to G$, i.e.\ $K = \{g\in G:\sigma(g)=g\}$.
Happily, the uniform measure\footnote{A rigorous discussion of the existence and uniqueness of which we omit, but see e.g. Ref.~\cite{duenez2004random}} on $G/K$ admits a simple sampling recipe: sample $g\in G$ uniformly randomly and then ``return'' $\sg(g^{-1})g$~\cite{duenez2004random,matsumoto2013weingarten}. Note that this measure is left-$K$-invariant; indeed let $g=kg_0,\ k\in K,\ g_0\in G$, whence
\begin{equation}
\sg(g^{-1})g=\sg((kg_0)^{-1})kg_0 = \sg(g_0)^{-1}\sg(k)^{-1}kg_0= \sg(g_0)^{-1}k^{-1}kg_0= \sg(g_0)^{-1}g_0,
\end{equation}
where we have used that the elements of $K$ are by definition fixed points of $\sg$, and that $\sg$ being a group homomorphism implies $\sg(g^{-1})=\sg(g)^{-1}$.
Our primary occupation in the following appendices will be to in various circumstances integrate the adjoint action of $G/K$ against this measure, i.e. ``twirling'' over $G/K$. As we shall see, the twirl over a symmetric space is somewhat less nice than its analogue over groups, though not entirely without structure. For example (recalling our notation $\mcl:=\eh$ and $\mct^{(n)}_{ \mce}: \mcl\tn\to\mcl\tn, \mct^{(n)}(A)= \expect_{U\sim \mce}U\tn A (U\ad)\tn$), we have

\begin{restatable}{lem}{equi}\label{lem:equi}
The $n$\textsuperscript{th} order twirl $\mct^{(n)}_{G/K}:\mcl\tn\to\mcl\tn$ is $K$-equivariant.
\end{restatable}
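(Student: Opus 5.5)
The plan is to show that the law of a random $t\sim G/K$ is invariant under conjugation by $K$. Equivariance of the twirl then follows from a change of variables inside the expectation defining $\mct^{(n)}_{G/K}$. Concretely, the claim is that for every $k\in K$ and $A\in\mcl\tn$,
\begin{equation*}
\mct^{(n)}_{G/K}\bigl(k\tn A (k\ad)\tn\bigr) = k\tn\, \mct^{(n)}_{G/K}(A)\, (k\ad)\tn .
\end{equation*}

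The first step uses the sampling recipe recalled above: $t=\sg(g^{-1})g$ with $g$ Haar-random on $G$. For fixed $k\in K$, I compute $k t k^{-1}$ using three facts: $\sg$ is a homomorphism, $\sg(k)=k$, and $\sg(k^{-1})=k^{-1}$. This gives
\begin{equation*}
k\,\sg(g^{-1})\,g\,k^{-1} = \sg(k)\,\sg(g^{-1})\,g k^{-1} = \sg\bigl(k g^{-1}\bigr)\, g k^{-1} = \sg\bigl((g k^{-1})^{-1}\bigr)\,(g k^{-1}).
\end{equation*}
So $k t k^{-1}$ is exactly the output of the same recipe applied to $g'=gk^{-1}$. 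By right-invariance of Haar measure on the compact group $G$, $g'$ is again Haar-distributed. Hence $k t k^{-1}$ has the same distribution as $t$; that is, the $G/K$ measure is $\Ad_K$-invariant. This mirrors the left-$K$-invariance computation already displayed, but uses right translation together with the fixed-point property.

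The second step is the change of variables. Since $G$ consists of unitaries, $k^{-1}=k\ad$. Write
\begin{equation*}
\mct^{(n)}_{G/K}\bigl(k\tn A (k\ad)\tn\bigr)=\expect_{t}\, (tk)\tn A\, \bigl((tk)\ad\bigr)\tn .
\end{equation*}
Substituting $t = k t' k\ad$, with $t'$ equidistributed with $t$ by the first step, gives $tk = k t'$. Pulling $k\tn$ and $(k\ad)\tn$ outside the expectation yields $k\tn\,\mct^{(n)}_{G/K}(A)\,(k\ad)\tn$, as required.

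The only point needing care is the first step. One must track the order of $\sg$ and inversion correctly; here $\sg(g^{-1})=\sg(g)^{-1}$ is used. One must also use right- rather than left-invariance of the Haar measure, and note that the pushforward of Haar measure under $g\mapsto gk^{-1}$ is Haar. Everything else is formal, so I expect no real obstacle beyond this bookkeeping.
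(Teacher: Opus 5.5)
Your proof is correct and is essentially the paper's argument. The paper makes the same substitution $g\mapsto gk^{\dagger}$, justified by right-invariance of Haar measure, and uses $\sigma(k)=k$; it does this inline within the integral rather than first isolating, as you do, the $\mathrm{Ad}_K$-invariance of the law of $t=\sigma(g^{-1})g$.
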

\begin{proof}
Let $A\in\mcl\tn,\ k\in K$; we directly calculate:
\begin{align}
    \left(\mct^{(n)}_{G/K} \circ \Ad_k\tn\right)(A)&=\mct^{(n)}_{G/K} (k\tn A k^{\dagger \otimes n})\\
    &=\int_{v\in G/K} v\tn k\tn A k^{\dagger \otimes n}v^{\dagger \otimes n}\\
    &=\int_{g\in G} (\sg(g)\ad g)\tn k\tn A k^{\dagger \otimes n}((\sg(g)\ad g)\ad)\tn\\
    &=\int_{g\in G} (\sg(g\ad) gk)\tn A (k\ad g\ad \sg(g) )\tn\\
    &=\int_{g\in G} (\sg(kg\ad) g)\tn A ( g\ad \sg(gk\ad) )\tn \label{eq:a6}\\
    &=\int_{g\in G} (\sg(k)\sg(g\ad) g)\tn A ( g\ad \sg(g)\sg(k\ad) )\tn\\
    &=\int_{g\in G} k\tn (\sg(g\ad) g)\tn A ((  \sg(g)\ad g)\ad)\tn k^{\dagger \otimes n}\\
    &=\int_{v\in G/K} k\tn v\tn A v^{\dagger \otimes n} k^{\dagger \otimes n}\\
    &= \left(  \Ad_k\tn \circ \mct^{(n)}_{G/K}\right)(A)
\end{align}
where on various occasions we used that $\sg$ is a group homomorphism, and in Eq.~\eqref{eq:a6} we used the right-invariance of the Haar measure on $G$ to make the change of variables $g\mapsto gk\ad$. Subsequently, we used that $\sg(k)=k\ \forall k\in K$.
\end{proof}
The statement of Lemma~\ref{lem:equi} is much weaker than that which can be made in the group case: $\mct^{(n)}_{G}$ is $G$-\textit{invariant}, and in fact projects onto the
$n$\textsuperscript{th} order commutant of $G$~\cite{mele2023introduction}. We will later see that $\mct^{(n)}_{G/K}$ is not even idempotent, and therefore cannot be a projector onto anything. More helpfully, Lemma~\ref{lem:equi} combined with Schur's lemma~\cite{fulton1991representation} tells us that the  twirl operator decomposes as
\begin{equation}
    \mct^{(n)}_{G/K} = \bigoplus_{\lm} \id_{d_\lm}\ot s_\lm,
\end{equation}
where the sum is over the $K$-irreps in $\mcl\tn$, the multiplicity spaces of which thereby furnishing the only non-trivial action of $\mct^{(n)}_{G/K}$. Recalling, for example, that under the adjoint action of the orthogonal group $\mcl$ decomposes as 
\begin{equation}
    \mcl\ \cong\  {\rm Sym}^2\mch \hspace{0.7mm} \oplus \hspace{0.7mm}\Lambda^2\mch\ \cong\  \mbc  \oplus U \oplus \hspace{0.7mm}\Lambda^2\mch 
\end{equation}
(with $U$ the orthogonal complement of the trivial $\mbo$-representation furnished by $\id$ in ${\rm Sym}^2\mch$), we see that 
\begin{align}
    \mct^{(1)}_{\rm AI}(A) &=  \a (A)\mcp_{\id}(A) + \b(A)\mcp_{U}(A) + \g(A) \mcp_{\Lambda^2\mch}(A)\\
    &= \a(A) \id + \b (A)\left(\frac{A+A^\mst}{2}-\frac{\tr[A]}{d}\id\right)+ \g (A)\left(\frac{A-A^\mst}{2}\right) \label{eq:aibaby}
\end{align}
for scalars $\a(A),\b(A),\g(A)$ that we will determine in Appendix~\ref{sec:ai}. Recalling that the first-order twirl over $G\in\{\mbu(d),\mbo(d),\mbsp(d)\}$ is given simply by $\mct^{(1)}_{G}(A) =(1/d)\tr[A]\id$, Eq.~\eqref{eq:aibaby} foreshadows the increased complexity that we will encounter when integrating over symmetric spaces.\\

Now, it is evident from the above discussion that a $k$\textsuperscript{th}-order integral over $G/K$ may be recast as a $(2k)$\textsuperscript{th}-order integral over $G$, allowing us to in our examples employ the standard Weingarten calculus on the unitary, orthogonal and symplectic groups~\cite{collins2006integration}. For example, recalling from the main text that $\sg_{\rm AI}(A)=A^*$, one sees that
\begin{equation}
    \mct^{(k)}_{\rm AI}(-) = \int_{g\in{\rm AI}} g\tk (-) (g\ad)\tk = \int_{U\in \mbu} (U^\mst U)\tk (-) (U\ad U^*)\tk ;
\end{equation}
this is a technique we will use on multiple occasions throughout the following appendices. 
It was recently shown by Matsumoto~\cite{matsumoto2013weingarten}, however, that one can integrate directly over the symmetric spaces, without resorting to a doubled-order integral on $G$; the rest of this appendix is devoted to (briefly) reviewing the results of that work.\\

Let $G/K$ be a classical compact symmetric space $\mcc$ realised as a matrix ensemble via $V=\sg(g)^{-1}g$ for $g\sim\mu_G$ and an involution $\sg$ on $G$, and recall the tabulation of all such spaces in Table~\ref{tab:symm_classif}. We let $\tilde v = Jv$. The main result of Ref.~\cite{matsumoto2013weingarten} is that moments of products of matrix entries $v_{ij}$ of $V$ can be written as a single sum over the symmetric group (for type A) or, for types B/C/D, the symmetric group mod its \textit{hyperoctahedral subgroup} (see below), with a class–dependent ``delta tensor'' and a Weingarten function\footnote{Explicit expressions for which are derived in Ref.~\cite{matsumoto2013weingarten}} ${\rm Wg}^{\mathcal C}$:
\begin{alignat}{2}
  \expect_{V\sim\,\rm{ AI}}\!\ \prod_{r=1}^k v_{i_{2r-1} i_{2r}}\overline{v_{j_{2r-1} j_{2r}}} &= \sum_{\sigma\in S_{2k}}\ \delta_\sigma(\bm i,\bm j)\ {\rm Wg}^{ \rm{ AI}}(\sigma )  &   \label{eq:a16}\\
  \expect_{V\sim\,\rm{ AII}}\!\ \prod_{r=1}^k \tilde{v}_{i_{2r-1} i_{2r}}\overline{\tilde{v}_{j_{2r-1} j_{2r}}} &= \sum_{\sigma\in S_{2k}}\ \delta_\sigma(\bm i,\bm j)\ {\rm Wg}^{\rm{ AII}}(\sigma )  & \label{eq:a17}\\
    \expect_{V\sim\,\rm{ AIII}}\!\ \prod_{r=1}^k v_{i_{r} j_{r}}  &= \sum_{\sigma\in S_{k}}\ \delta_\sigma(\bm i,\bm j)\ {\rm Wg}^{\rm{ AIII}}(\sigma;p,q)  & \label{eq:a18}\\
  \expect_{V\sim\,\rm{BDI}}\!\ \prod_{r=1}^k v_{i_{2r-1} i_{2r}} &= \sum_{\sigma\in S_{2k}/H_k}\ \Delta^{\rm{BDI}}_\sigma(\bm i)\ {\rm Wg}^{\rm{BDI}}(\sigma;p,q)\quad &\label{eq:a19}\\
  \expect_{V\sim\,\mcc}\!\ \prod_{r=1}^k \tilde v_{i_{2r-1} i_{2r}} &= \sum_{\sigma\in S_{2k}/H_k}\ \Delta^{\mathcal C}_\sigma(\bm i)\ {\rm Wg}^{\mathcal C}(\sigma;\text{params})\,,\qquad &\qquad\mcc\in\{{\rm CI,\, CII,\, DIII}\}\label{eq:a20}
\end{alignat}
This stands in a perhaps somewhat surprising contrast to the classical group case ($G\in\{\mbu,\mbo,\mbsp\}$), where moments are given by a double sum $\sum_{\sigma,\tau}\Delta_\sigma^G(\bm i)\Delta_\tau^G(\bm j){\rm Wg}^{G}(\sigma^{-1}\tau;d)$~\cite{collins2006integration,mele2023introduction} for ``delta-tensors'' $\Delta^G$. In our case the delta-tensors depend on the symmetric space as
\begin{equation}\label{eq:tri}
\Delta^{\mathcal C}_\sigma =
\begin{cases}
\delta_\sigma(\bm i,\bm j)=\prod_{r=1}^{{\rm len}(\bm i)}\delta_{\,i_r,\ j_{\sigma(r)}}, & \mathcal C\in\{{\rm AI},{\rm AII},{\rm AIII}\},\\
\Delta_\sigma(\bm i)=\prod_{r=1}^{{\rm len}(\bm i)/2}\delta_{\,i_{\sigma(2r-1)},\ i_{\sigma(2r)}}, & \mathcal C\in\{{\rm BDI},{\rm DIII}\},\\
\Delta'_\sigma(\bm i)=\prod_{r=1}^{{\rm len}(\bm i)/2}  e_{\,i_{\sigma(2r-1)}}^\mst{J\,e_{\,i_{\sigma(2r)}}}, & \mathcal C\in\{{\rm CI},{\rm CII}\}
\end{cases}
\end{equation}
where $J=\begin{pmatrix}
    0&-\id\\\id&0
\end{pmatrix}$ is the canonical antisymmetric bilinear form and ${\rm len}(\bm i)$ is the number of entries in the vector $\bm i$.
Recalling that the hyperoctahedral group $H_k\le S_{2k} \cong \mbz_2 \wr S_k $ is the subgroup  that permutes the $k$ disjoint pairs $\{(1,2),(3,4),\dots,(2k\!-\!1,2k)\}$ and flips elements inside each pair, we see that the sums in Eqs.~\eqref{eq:a19} and~\eqref{eq:a20} are over sets $\{ \{\sigma(1),\sigma(2)\}, \{\sigma(3), \sigma(4)\},\dots,\{
\sigma(2k-1), \sigma(2k)\}\}$ of pairs of elements where  $\sigma \in S_{2k}$ satisfies
$\sigma(2i-1)< \sigma(2i) \ (1 \le i \le k) $  and $ \sigma(1)< \sigma(3)< \cdots < \sigma(2k-1). $

\section{Classical shadows over symmetric spaces}\label{sec:cs}
\noindent
From the algorithmic description of the  setup in the main text, one sees immediately that the classical shadow ``measurement channel'' $\mcm_{\mce,\mcw}$ (that is, the quantum channel describing the procedure) is given by 
\begin{equation}
    \mcm_{\mce,\mcw}(\rho) =\sum_{w\in\mcw}\expect_{U\sim \mce} \tr \left[U \rho U \ad \Pi_w \right]  U^{\dagger}\Pi_wU,\label{eq:mc1}
\end{equation}
where we employ the notation $\Pi_w\equiv\ketbra{w}$; a little graphical manipulation gives
\begin{align}
    \sum_{w\in\mcw}\expect_{U\sim \mce} \tr \left[U \rho U \ad \Pi_w \right]  U^{\dagger}\Pi_wU 
    &=\sum_{w\in\mcw}\expect_{U\sim \mce}\img[1.65cm]{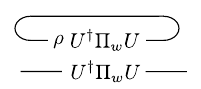}\\
    &=  \tr_1\left[\left(\rho \otimes \id\right) \expect_{U\sim \mce}\left\{ \sum_{w\in\mcw}\left( U^{\dagger}\Pi_wU\right)\hspace{-0.7mm}{}^{\otimes 2}\right\}\right]\\
    &= \tr_1\left[\left(\rho \otimes \id\right) \mct^{(2)}_{\mce} \left( \sum_{w\in\mcw} \Pi_w\ts \right)\right]\,,
\end{align}
where in the last line we have used the invariance of our considered ensembles under the transformation $U\mapsto U\ad$.
We will also occasionally find it useful to work with the equivalent formulation
\begin{equation}
\mcm_{\mce,\mcw} = \expect_{g\sim \mce} {\rm Ad}_{g^\dagger}\circ \mca_\mcw \circ {\rm Ad}_{g}
\end{equation}
where $\mca_\mcw(-)=\sum_{w\in\mcw} \bra{w}(-)\ket{w}\ketbra{w}$ is a dephasing channel with respect to $\mcw$.
Now, were $\mce$ the uniform ensemble over a group $G$, the measurement channel would be manifestly $G$-equivariant; indeed note that for any $h\in G$ an appropriate change  of variables ($g\mapsto gh$) in the integral yields
\begin{align*}
    {\rm Ad}_{h}\circ \mcm_{G,\mcw} = \expect_{g\sim G} {\rm Ad}_{h}\circ {\rm Ad}_{g^\dagger}\circ \mca_\mcw \circ {\rm Ad}_{g} = \expect_{g\sim G} {\rm Ad}_{h g^\dagger}\circ \mca_\mcw \circ {\rm Ad}_{g}&= \expect_{g\sim G} {\rm Ad}_{ g^\dagger}\circ \mca_\mcw \circ {\rm Ad}_{gh}\\&= \expect_{g\sim G} {\rm Ad}_{ g^\dagger}\circ \mca_\mcw \circ {\rm Ad}_{g}\circ {\rm Ad}_{h} 
    = \mcm_{G,\mcw}\circ {\rm Ad}_{h};
\end{align*}
such a channel would therefore have a nice decomposition over $G$-irreps in $\mcl$,
\begin{equation}
    \mcm_{\mu_G,\mcw} = \bigoplus_{\lm} \id_{d_\lm}\ot s_\lm,
\end{equation}
and our first question is whether something similar holds for $\mce=\mu_{G/K}$, the invariant measure on a compact symmetric space.\\ 

We begin by noting that, although $\mct^{(k)}_{G/K}$ is $K$-equivariant (Lemma~\ref{lem:equi}), the measurement channel is \textit{not}; in general, there will be elements $k\in K$ for which $ \mcm\circ {\rm Ad}_{k} \neq {\rm Ad}_{k}\circ\mcm $. Indeed, for $k\in K$ a direct calculation gives
\begin{align}
    (\mcm_{G/K,\mcw} \circ {\rm Ad}_{ k})(\rho) &= \sum_{w\in\mcw}\expect_{U\sim G/K}  \tr[Uk\rho k\ad U\ad \Pi_{w} ] U\ad \Pi_{w} U\\
    &= \sum_{w\in\mcw}\expect_{g\sim G}  \tr[\sg(g\ad)gk\rho k\ad g\ad \sg(g) \Pi_{w} ] g\ad\sg(g) \Pi_{w} \sg(g\ad)g\\
    &= \sum_{w\in\mcw}\expect_{g\sim G}  \tr[\sg(kg\ad)g\rho  g\ad \sg(gk\ad) \Pi_{w} ] kg\ad\sg(gk\ad) \Pi_{w} \sg(kg\ad)gk\ad\\
    &= \sum_{w\in\mcw}\expect_{g\sim G}  \tr[\sg(g\ad)g\rho  g\ad \sg(g)k\ad \Pi_{w}k ] kg\ad\sg(g)k\ad \Pi_{w} k\sg(g\ad)gk\ad\\
    &= \sum_{w\in\mcw}\expect_{g\sim G}  \tr[\sg(g\ad)g\rho  g\ad \sg(g) {\rm Ad}_{k\ad} \left(\Pi_{w}\right) ] kg\ad\sg(g) {\rm Ad}_{k\ad} \left(\Pi_{w}\right)\sg(g\ad)gk\ad\\
    &= k\left(\sum_{w\in\mcw}\expect_{U\sim G/K}  \tr[U\rho  U\ad {\rm Ad}_{k\ad} \left(\Pi_{w}\right) ] U\ad {\rm Ad}_{k\ad} \left(\Pi_{w}\right)U \right)k\ad\\
    &= ({\rm Ad}_{ k}\circ\mcm_{G/K,{\rm Ad}_{k\ad}(\mcw)}  )(\rho)
\end{align}
where ${\rm Ad}_{k\ad}(\mcw)$ denotes the basis with elements given by $\{k\ad\ket{w}\}_{w\in\mcw}$. As different choices of measurement basis generically lead to distinct shadow protocols, we see that in general $\mcm$ fails to commute with ${\rm Ad}_K$. Evidently there is nonetheless a subgroup of  $K$ for which it does; indeed let us say a group $H$ is $\mcw$\textit{-normalizing} if, for all $h\in H$ and $\ket{w}\in\mcw$, there exists a $\ket{w'}\in\mcw$ such that $h\ketbra{w}h\ad=\ketbra{w'}$. We will use $ N _\mcw$ to denote the group of all $\mcw$-normalizing elements of $G$.  The above calculation combined with Schur's lemma then immediately implies: 
\lemstab*

\noindent
Now, in principle  the decomposition of Eq.~\eqref{eq:symstruct} may not be particularly simple; indeed, the $s_\lm$ are at this point in the story arbitrary dense matrices on the multiplicity spaces of the irrep decomposition of $\eh$, which, a priori, could be very large. Happily, however, we find that in all cases the shadow channel takes a very simple form, given by:

\addtocounter{thm}{-1} 
\begin{restatable}{thm}{thmapp} 
    Sampling uniformly from $G/K$ and measuring in the basis $\mcw$ with respect to which the involutions take their standard forms yields
    \begin{equation}
        \mcm_{G/K,\mcw}(\rho) = (1-\a_{G/K})\mcm_{G,\mcw}(\rho) + \b_{G/K}\mca_{\mcw}(\rho) +(\a_{G/K}-\b_{G/K})\left(J\mca_\mcw(\rho)J\ad - \mca_\mcw(\rho J)J\right),\label{eq:thm1app}
    \end{equation}
    where $\mca_{\mcw}$ is a dephasing channel onto $\mcw$.
    For AI, AII, CI, and DIII, $\a_{G/K}\in\mco(d^{-2})$; for AIII, BDI and CII, $0\le \a_{G/K} \le 1$ is tunable. Finally, $\a_{G/K}=\b_{G/K}$ for $G\neq \mbsp$. 
\end{restatable}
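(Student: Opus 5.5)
The proof proceeds case by case over the seven families of Table~\ref{tab:symm_classif}. Two ingredients do most of the work: Lemma~\ref{lem:stab}, to pin down the \emph{form} of the channel, and an explicit second-order computation, to fix the coefficients.

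\textbf{Step 1: the symmetry group.} For each family I will identify a large subgroup $H\subseteq K\cap N_\mcw$. When the involution is in standard form with respect to $\mcw$, $K\cap N_\mcw$ always contains the diagonal phase (or sign) matrices of $K$ together with the permutation matrices lying in $K$.
\begin{itemize}
\item For AI, BDI and DIII this is a signed-permutation-type group.
\item For AIII it is the torus times $S_p\times S_q$.
\item For the symplectic cases it is the group of diagonal phases $\mathrm{diag}(D,\bar D)$ together with the permutations that commute with $J$.
\end{itemize}

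\textbf{Step 2: the form of the channel.} I will decompose $\mcl=\eh$ under $H$. The torus (or sign) part forces any $H$-equivariant map to send $E_{ij}$ to a multiple of $E_{ij}$ when $i\neq j$. In the symplectic case it may instead send $E_{ij}$ to a multiple of $J E_{ij} J^\dagger$-type partners, which also carry the same phase character. On the diagonal it acts by a matrix commuting with the permutation part.

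Combining this with the structure of the parent channel, $\mcm_{G,\mcw}(\rho)=a\rho+b\tr[\rho]\id$ (plus, for $\mbso$ or $\mbsp$, the transpose or $J$-twisted term), reduces the possible channel to a handful of free coefficients. I would then use two constraints to eliminate coefficients and arrive at the ansatz in Eq.~\eqref{eq:thm1app}:
\begin{itemize}
\item trace preservation, $\tr\mcm(\rho)=\tr\rho$;
\item the identity $\mcm(\id)=\id$.
\end{itemize}
For $G\neq\mbsp$ the $J$-terms are absent, since they are not allowed by $H$, and a single parameter $\a$ remains.

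\textbf{Step 3: fixing the coefficients.} I will evaluate a few matrix elements using Eq.~\eqref{eq:mc2}, for example $\bra{i}\mcm(E_{ii})\ket{i}$ and $\bra{i}\mcm(E_{ij})\ket{j}$, and in the symplectic case one coupled $J$-partner entry. These require only entries of $\mct^{(2)}_{G/K}(\sum_w\Pi_w^{\ot 2})$. I would compute them either with Matsumoto's formulas Eqs.~\eqref{eq:a16}--\eqref{eq:a20} at $k=2$, which need only the $S_4$ or $S_4/H_2$ Weingarten values, or by rewriting them as fourth moments over $G$ via $V=\sg(g)^{-1}g$.

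\textbf{Step 4: asymptotics and tunability.} From the explicit $\a$ values I will read off the following.
\begin{itemize}
\item $\a=\mco(d^{-2})$ for AI, AII, CI and DIII.
\item For AIII, BDI and CII, $\a$ is a function of the signature $s$. At $s=\pm d$ we have $K=G$ and $V=\id$, so the channel is $\mca_\mcw$ and $\a=1$. At $s\approx 0$, $\a$ is $\mco(d^{-2})$. Continuity in $s$ then gives the full range $0\le\a\le1$.
\item For $\mbsp$, $|\a-\b|=\mco(1/d)$ follows by comparing the two coefficients.
\end{itemize}

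\textbf{Main obstacle.} I expect Step 2 to be the hardest part in the symplectic cases. There, $\mcl$ under $H$ has genuine multiplicity: $E_{ij}$ and the block-mirrored entries share characters. Showing that only the specific combination $J\mca_\mcw(\rho)J^\dagger-\mca_\mcw(\rho J)J$ appears will need care. My first attempt would be to impose equivariance under the additional elements $J\in K$ itself (for CI) and $K_{p,q}$ (for CII), and check that they kill all other combinations. If that fails, I would fall back on computing the full set of matrix elements explicitly.
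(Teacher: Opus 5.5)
\textbf{The gap is in Step 2.} Equivariance under $H\subseteq K\cap N_\mcw$ plus trace preservation leaves several free parameters, not one (or two). There are three concrete failures.
\begin{enumerate}
\item[(i)] For AI, BDI and DIII the diagonal part of $H$ consists only of signs. Under signs, $E_{ij}$ and $E_{ji}$ carry the same character, so your claim that $E_{ij}$ must go to a multiple of $E_{ij}$ fails, and a $\rho\mapsto\rho^{\mst}$ component is allowed. For AI, $H\cong\mbz_2\wr S_d$ permits independent scalars on the identity, the traceless diagonal, and the symmetric and antisymmetric off-diagonal subspaces. The paper notes at the end of Appendix~\ref{sec:ai} that equality of the last two is invisible to this argument.
\item[(ii)] For AIII, BDI and CII every element of $K$ commutes with $I_{p,q}$ (resp.\ $K_{p,q}$), so $H$ never mixes the two blocks of $\mcw$. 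Equivariance then allows different scalars on the $pp$, $qq$ and $pq$ off-diagonal sectors, on the two traceless-diagonal pieces, and on $\sum_{i\le p}E_{ii}/p-\sum_{i>p}E_{ii}/q$. The claimed channel, by contrast, is invariant under all permutations of $\mcw$.
\item[(iii)] For DIII (and AII) one has $J\in K\cap N_\mcw$. The maps $\rho\mapsto J\mca_\mcw(\rho)J\ad$ and $\rho\mapsto\mca_\mcw(\rho J)J$ commute with $\Ad_h$ for every $h\in K\cap N_\mcw$, so $H$ does not exclude the $J$-terms.
\end{enumerate}
Your two extra constraints are really one: $\mcm$ is Hilbert--Schmidt self-adjoint, so trace preservation and $\mcm(\id)=\id$ coincide. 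Even with full monomial symmetry you would still need the coefficients of $\tr[\rho]\id$ and $\rho$ to agree. No conjugation symmetry can enforce this, since both maps are fully unitarily equivariant. Appealing to ``the structure of the parent channel'' presupposes the answer, because $\mcm_{G/K,\mcw}$ inherits nothing from $\mcm_{G,\mcw}$ a priori.

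\textbf{What the paper uses instead} is the index structure of the fourth-moment tensor $\sum_w\mbe[\overline{v_{w,b}v_{w,i}}v_{w,a}v_{w,j}]$. Matsumoto's one-sided formulas restrict it to a few contraction patterns:
\begin{itemize}
\item for type A, no $\delta_{a,j}\delta_{b,i}$ term appears (both $a,j$ are unconjugated), hence no $\rho^{\mst}$;
\item for BDI/DIII, only Kronecker deltas appear once the signed permutation $J$ is absorbed by relabelling $w$, hence no $J$-terms;
\item for CI/CII, $J$-contractions appear in addition.
\end{itemize}
The tensor's manifest exchange symmetry $b\leftrightarrow i$ (full index symmetry in the real cases) then does the rest. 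Equivalently, $\mct^{(2)}(\sum_w\Pi_w^{\ot 2})$ is supported on the symmetric subspace. This equates the coefficients of $\tr[\rho]\id$, $\rho$ (and $\rho^{\mst}$), kills $J_{a,j}J_{b,i}$, and pairs the remaining $J$-patterns into $J\mca_\mcw(\rho)J\ad-\mca_\mcw(\rho J)J$. Trace preservation then leaves $\a$ (and $\b$).

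If you want to keep a symmetry-based route, note that right-invariance gives $V\overset{d}{=}\sg(h)^{-1}Vh$ for every $h\in G$. Hence $\mcm$ commutes with $\Ad_h$ whenever $\sg(h)\in N_\mcw$, not only for $h\in K$. This enlarges the symmetry (e.g.\ to all monomial unitaries for AI and AIII) and repairs (i) and (ii) there, but you still need the exchange symmetry. Without one of these inputs, Step~3 must compute every parameter left free by $H$ and check the required relations explicitly, rather than fixing the single $\a$.
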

\begin{proof}
Let us begin by rewriting Eq.~\eqref{eq:mc1} in terms of its matrix elements (with respect to the basis of $\mcw$):
\begin{align}
 \mcm_{\mce,\mcw}(\rho) &=\sum_{w\in\mcw}\expect_{V\sim \mce} \tr \left[V \rho V \ad \Pi_w \right] V^{\dagger}\Pi_wV\\
 &=\sum_{w\in\mcw}\sum_{i,j,a,b} \rho_{a,b}\expect_{V\sim \mce}[\overline{v_{w,b}v_{w,i}}v_{w,a}v_{w,j}] \ketbra{i}{j}\label{eq:b15}
\end{align}
We are in a position to directly employ the results of Ref.~\cite{matsumoto2013weingarten}, which we can do on a case-by-case basis according to the pentachotomy of Eqs.~\eqref{eq:a16},~\eqref{eq:a17},~\eqref{eq:a18},~\eqref{eq:a19} and~\eqref{eq:a20}. For example, consider the case that $\mce$ is the uniform measure on AI, whence by Eqs.~\eqref{eq:a16} and~\eqref{eq:tri} we have
\begin{align}
   \sum_{w\in\mcw} \expect_{V\sim {\rm AI}}[\overline{v_{w,b}v_{w,i}}v_{w,a}v_{w,j}] &= \sum_{w\in\mcw}\sum_{\sg\in S_4} \delta_\sg((w,b,w,i),(w,a,w,j)){\rm Wg}^{\rm AI}(\sg)\label{eq:b16}\\
    &= \lm_{\rm AI} \delta_{a,b}\delta_{i,j}+\lm_{\rm AI}' \delta_{a,i}\delta_{b,j}+\lm_{\rm AI}'' \delta_{a,b,i,j}
\end{align}
for some coefficients $\lm_{\rm AI},\lm_{\rm AI}',\lm_{\rm AI}''$; by the manifest symmetry of the LHS of Eq.~\eqref{eq:b16} under the exchange $b \leftrightarrow i$, we further see that $\lm_{\rm AI}=\lm_{\rm AI}'$.  Substituting this into Eq.~\eqref{eq:b15} yields
\begin{align}
    \mcm_{{\rm AI},\mcw}(\rho)  &= \lm_{\rm AI} (\tr[\rho]\id +  \rho) + \lm_{\rm AI}'' \sum_i \rho_{i,i}\ketbra{i}{i} \\
    &=  \lm_{\rm AI} (d+1)\left( \frac{\tr[\rho]\id +  \rho}{d+1}\right) + \lm_{\rm AI}'' \mca_\mcw(\rho) \\
    &= \left(1-\a_{\rm AI}\right) \mcm_{\mbu,\mcw}(\rho)   +  \a_{\rm AI} \mca_\mcw(\rho)\label{AI_form} 
\end{align}
In the final line we have recalled that the measurement channel in the unitary case is exactly a depolarising channel of strength $p=d/(d+1)$,  made the definition $\a_{\rm AI}:=1-(d+1)\lm_{\rm AI}$, and noticed that by trace preservation $\a_{\rm AI}=\lm_{\rm AI}''$.  The case of AIII proceeds from Eq.~\eqref{eq:a18} by identical logic and the observation that  (up to the harmless action of a diagonal unitary matrix) the elements of AIII are self-adjoint. Now, to fix the remaining coefficient we need to make explicit use of the known form of the Weingarten coefficients; a direct calculation via computer algebra yields 
\begin{align}
    \alpha_{\rm AI}&=\frac{2}{d(d+3)}\\
    \alpha_{\rm AIII}&=\frac{ s^4  + 2 s^2 (d-2)  + d^2}{d^2(d-1)(d+3)},\label{eq:aiiialpha}
\end{align}
where $s=p-q$ is the signature of $I_{p,q}$. Note that $\alpha_{\rm AI}\in\mco(d^{-2})$ but that $\alpha_{\rm AIII}$ can be $\mco(1)$ (for $s\sim d$).  The details behind such a calculation are (in the AI case) given in Appendix~\ref{sec:ai}; indeed we will see several ways to obtain the answer. 
\\

The sole complication introduced in the case of AII is that our expression Eq.~\eqref{eq:a17} is in terms of the entries of the random matrix $\tilde V = JV$, not $V$ itself (where $V$ is the matrix drawn uniformly from AII). This, however, presents no fundamental difficulty; indeed we have

\begin{align}
 \mcm_{{\rm AII},\mcw}(\rho) &=\sum_{w\in\mcw}\expect_{V\sim {\rm AII}} \tr \left[V \rho V \ad \Pi_w \right] V^{\dagger}\Pi_wV\\
 &=\sum_{w\in\mcw}\expect_{V\sim {\rm AII}} \tr \left[\tilde V \rho  \tilde  V \ad J \Pi_w J^{\mst}\right] \tilde  V^{\dagger} J \Pi_w  J^{\mst}\tilde  V\\
 &=\sum_{w\in\mcw}\expect_{V\sim {\rm AII}} \tr \left[\tilde V \rho  \tilde  V \ad \Pi_w  \right] \tilde  V^{\dagger}  \Pi_w  \tilde  V\\
 &=\sum_{w\in\mcw}\sum_{i,j,a,b} \rho_{a,b}\expect_{V\sim {\rm AII}}[\overline{\tilde v_{w,b} \tilde v_{w,i}}\tilde v_{w,a}\tilde v_{w,j}] \ketbra{i}{j} 
\end{align}
where we have used that $J$ acts as a signed permutation on $\mcw$, so that the sum over $w$ is effectively merely rearranged. At this point we can employ Eq.~\eqref{eq:a17} and subsequently proceed to the desired conclusion exactly as in the cases of AI and AIII. The specific value of the non-symmetry-fixed coefficient is calculated by direct computer algebra (using the methods of Appendix~\ref{sec:ai}) to be 
\begin{equation}
    \a_{\rm AII} = \frac{2}{d(d-1)}\,.
\end{equation}

Next let us turn (jointly) to the cases of BDI and DIII. Their equivalence at the structural level follows from their shared behaviour in Eq.~\eqref{eq:tri} and their distinction at the levels of Eq.~\eqref{eq:a19} and Eq.~\eqref{eq:a20} being solely in the introduction of the ``tilded variables'', which as we have  seen is for our purposes immaterial. To be concrete, let us consider specifically BDI, whence Eqs.~\eqref{eq:a19} and~\eqref{eq:tri} yield
\begin{align}
   \sum_{w\in\mcw} \expect_{V\sim {\rm BDI}}[{v_{b,w}v_{w,a}}v_{i,w}v_{j,w}] &= \sum_{w\in\mcw}\sum_{\sg\in S_8 / H_4} \Delta_\sg((b,w,w,a,i,w,w,j)){\rm Wg}^{\rm BDI}(\sg)\label{eq:bdithisone} \\
    &= \lm_{\rm BDI} \delta_{a,b}\delta_{i,j}+\lm_{\rm BDI}' \delta_{a,i}\delta_{b,j}+\lm_{\rm BDI}'' \delta_{a,j}\delta_{b,i}+\lm_{\rm BDI}''' \delta_{a,b,i,j}
\end{align}
where we note with interest the appearance of a term missing from the unitary case of Eq.~\eqref{eq:b16}. By the manifest symmetry of Eq.~\ref{eq:bdithisone} under the permuting of $a,b,i$ and $j$ we obtain 
\begin{align}
    \mcm_{{\rm BDI},\mcw}(\rho)  &= \lm_{\rm BDI} (\tr[\rho]\id +  \rho+\rho^\mst) + \lm_{\rm BDI}''' \mca_\mcw (\rho)  \\
    &=  \left(1-\a_{\rm BDI}\right) \mcm_{\mbo,\mcw} (\rho)  +  \a_{\rm BDI} \mca_\mcw(\rho) 
\end{align}
and, \textit{mutatis mutandis}, the same expression for DIII. As before, we can fix the precise values of the coefficients using brute force computer algebra, finding
\begin{align}
  \a_{\rm BDI}&=  \frac{  s^{4}  + (6d-4)s^2  + 3 d(d - 2)}{d(d-1)(d+1)(d+6)}\\
  \a_{\rm DIII}&=  \frac{3}{d^2-1},
\end{align}
where  $s=p-q$ reprises its role as the signature of $I_{p,q}$.

Finally we turn to the (structurally identical) cases of CI and CII, taking for notational concreteness the case of CI. Up to the action of an (irrelevant for our purposes, as the phases cancelled) diagonal matrix $I'_{nn}$, the random variable $V\sim \mu_{\rm CI}$ is self-adjoint, so that the relevant expectation value for the shadow channel becomes

\begin{align}
   \sum_{w\in\mcw} \expect_{V\sim {\rm CI}}[{v_{b,w}\ad v_{w,a}}v_{i,w}\ad v_{w,j}] 
   &=\sum_{w\in\mcw} \expect_{V\sim {\rm CI}}[{v_{b,w}v_{w,a}}v_{i,w}v_{w,j}]\label{eq:ciearly} \\
   &=\sum_{w\in\mcw} \expect_{V\sim {\rm CI}}[\pm{\tilde v_{b^\sharp,w}\tilde v_{w^\sharp,a}}\tilde v_{i^\sharp,w}\tilde v_{w^\sharp,j}] \\
   &= \sum_{w\in\mcw}\pm\sum_{\sg\in S_8 / H_4} \Delta_\sg'((b^\sharp,w,w^\sharp,a,i^\sharp,w,w^\sharp,j)){\rm Wg}^{\rm CI}(\sg)  \\
    &= \lm_{\rm CI}^{(1)} \delta_{a,b}\delta_{i,j}+\lm_{\rm CI}^{(2)} \delta_{a,i}\delta_{b,j}+\lm_{\rm CI}^{(3)} J_{a,j}J_{b,i}+\lm_{\rm CI}^{(4)} \delta_{a,b,i,j}+\lm_{\rm CI}^{(5)} \delta_{a,b}J_{a,i}J_{a,j}+\lm_{\rm CI}^{(6)} \delta_{a,i}J_{a,b}J_{a,j}\label{eq:cifun}
\end{align}
where we have used Eq.~\eqref{eq:tri} and a superscript $\sharp$ to denote the symplectic partner index (that is, the unique index such that $J_{a,a^\sharp}\neq 0$). Now, by the symmetry of  Eq.~\eqref{eq:ciearly} under the exchange $b\leftrightarrow i$, we clearly have $\lm_{\rm CI}^{(1)} =\lm_{\rm CI}^{(2)} $, $\lm_{\rm CI}^{(5)} =\lm_{\rm CI}^{(6)} $ and $\lm_{\rm CI}^{(3)} =0$, whence
\begin{align}
    \mcm_{{\rm CI},\mcw}(\rho)  &= \lm_{\rm CI}^{(1)}(\tr[\rho]\id + \rho) + \lm_{\rm CI}^{(4)}\mca_\mcw(\rho) + \lm_{\rm CI}^{(5)}(J \mca_\mcw(\rho) J\ad - \mca_\mcw(\rho J)J)\\
    &=(1-\a_{\rm CI})(\mcm_{\mbsp,\mcw}(\rho)) + \b_{\rm CI}\mca_\mcw(\rho) + (\a_{\rm CI}-\b_{\rm CI})(J \mca_\mcw(\rho) J\ad - \mca_\mcw(\rho J)J)
\end{align}
where the introduction of the variables $\a_{\rm CI}$ and $\b_{\rm CI}$ to replace $\lm_{\rm CI}^{(1)},\,\lm_{\rm CI}^{(4)},$ and $\lm_{\rm CI}^{(5)}$ is as usual possible due to trace preservation, and we have used that the $\mbsp$ shadows channel is simply a depolarising channel of strength $p=d/(d+1)$~\cite{west2024random}. The derivation for the CII case is identical. For the specific values of the coefficients we again turn to brute force computer algebra, finding
\begin{align}
    \a_{\rm CI} &= \frac{ 3}{(2n-1)(2n+3)}\\
    \b_{\rm CI} &= \frac{6n+1}{(2n-1)(2n+1)(2n+3)}\\
    \a_{\rm CII} &= \frac{ 4 s^4- 10 s^2+ 3 n^2 + 3 n  }{n(n-1)(2n-1) (2n+3)}\\
    \b_{\rm CII} &= \frac{3 n (n + 1) (2 n^2 + n + 1) + 4 s^2 ((2 n^2 + n + 1) s^2 + 2 n^3 - 5 n^2 - 6 n - 1)}{n(n-1)(n+1)(2n-1)(2n+1)(2n+3)}
\end{align}
where we have introduced $n=d/2$.
We note that, similarly to previous cases, $\a_{\rm CI}\in\mco(d^{-2})$, but $\a_{\rm CII}\in\mco(1)$ for $s\sim n$, and that 
\begin{equation}
    |\a_{\rm CII}-\b_{\rm CII}| = \frac{|2 (s^2 - n^2) (4 ns^2 - 3 n - 3)|}{n(n-1)(n+1)(2n-1)(2n+1)(2n+3)}\in\mco(1/n)
\end{equation}
for all allowable values of $s$.

\end{proof}

\subsection{Sample complexities}
Next let us turn our attention to the question of the \textit{variance} of the classical shadow estimators. By the general relation $\Var [\hat{o}] = \mbe[\hat{o}^2]-\mbe[\hat{o}]^2\leq \mbe[\hat{o}^2]$ and the readily seen Hermiticity of $\mcm$ with respect to the Hilbert-Schmidt inner product, we obtain the commonly used bound
\begin{align}
    \Var_{\mce,\mcw} [\hat{o}] &\leq  \sum_w\int_{V\sim \mce}\tr\left[\rho V^\dagger\Pi_wV\right]\tr\left[O \mcm^{-1}\left(V^\dagger\Pi_wV\right)\right] ^2  \\
     &= \sum_w\int_{V\sim \mce}   \tr\left[\rho V^\dagger\Pi_wV\right]\tr\left[\mcm^{-1}\left(O\right) V^\dagger\Pi_wV\right] ^2  \\
     &= \sum_w\int_{V\sim \mce}    \tr\left[\left(\rho\otimes \mcm^{-1}\left(O\right)\otimes\mcm^{-1}\left(O\right)\right)\left(V^{\dagger\otimes 3}\Pi_w^{\otimes 3}V^{\otimes 3}\right) \right] \\
     &=     \tr\left[\left(\rho\otimes \mcm^{-1}\left(O\right)\otimes\mcm^{-1}\left(O\right)\right)\mct^{(3)}_\mce\left(\sum_w\Pi_w^{\otimes 3}\right) \right]
     \end{align}
where we have again used the invariance of our ensembles under the transformation $V\mapsto V\ad$. 
In fact, one can equivalently~\cite{huang2020predicting} consider the traceless $O_0:=O-\tr[O]\id/d$; this turns out to simplify things a little.

\subsubsection{AIII}
\noindent
We turn to bounding the variance of the AIII protocol. Writing $X=\mcm^{-1}(O_0)$ and $D=\mca_\mcw(X)$ we have
\begin{align}
    \mbe[\hat{o}_0^2] &= \sum_w\int_{V\sim \mu_{\rm AIII}}\tr\left[\rho V^\dagger\Pi_wV\right]\tr\left[O_{ 0} \mcm^{-1}\left(V^\dagger\Pi_wV\right)\right] ^2  \\
     &= \sum_w\int_{V\sim \mu_{\rm AIII}}    \tr\left[\left(\rho\otimes  X \otimes X \right)\left(V^{\dagger\otimes 3}\Pi_w^{\otimes 3}V^{\otimes 3}\right) \right] \\
     &=  \tr\Bigg[\left(\rho\otimes  X \otimes X \right)\Bigg(\sum_{abcefg}\ketbra{abc}{efg}\Big( c_1(\delta_{a e} \delta_{b f} \delta_{c g}+\delta_{a e} \delta_{b g} \delta_{c f}+\nonumber\\
     &\hspace{10mm}+\delta_{a f} \delta_{b e} \delta_{c g}+\delta_{a f} \delta_{b g} \delta_{c e}+\delta_{a g} \delta_{b e} \delta_{c f}+\delta_{a g} \delta_{b f} \delta_{c e} )+ c_2 (\delta_{a bef} \delta_{c g}  +  \delta_{a beg} \delta_{c f}+ \delta_{a bfg}  \delta_{c e}  \nonumber\\
     &\hspace{10mm}  +  \delta_{a cef}  \delta_{b g}    +  \delta_{a ceg} \delta_{b f} + \delta_{a cfg} \delta_{b e}  +  \delta_{a e} \delta_{b cfg} + \delta_{a f} \delta_{b ceg}  +  \delta_{a g} \delta_{b cef} ) + c_3\delta_{a bcefg} \Big)\Bigg) \Bigg] \label{eq:aiii16} \\
     &= c_1\Big(\tr[X^2]+2\tr(\rho X^2)\Big)+c_2\Big(2\tr(\mca_\mcw(\rho)X^2)+2\tr[\rho\{D,X\}]+\tr[D^2]\Big) + c_3\,\tr[\rho D^2]
     \end{align}
where
\begin{align}
    c_1 &= \frac{(s^2-d^2)(s^2-(d+2)^2)(d^3+8d^2+2ds^2+7d+6s^2-36)}{d^2(d-1) (d+1)^2(d+2)(d+3)(d+4)(d+5)}\\
    c_2&=-\frac{(s^2-d^2)(d^3 + 2d^2s^2 + 7d^2 + ds^4 + 2ds^2 + 20d +3s^4 - 20s^2 + 32)}{d^2(d-1)(d+1)^2(d+3)(d+4)(d+5)}\\
    c_3&= \frac{(s^2-1)(s^2+3d-2s)(s^2+3d+2s)}{d^2(d-1) (d+1)^2(d+5)}
\end{align}
were obtained by brute force symbolic Weingarten calculus. 
Splitting $O_0$ into diagonal and off-diagonal parts, $O_D = \mca_\mcw(O_0),\, O_O = O_0-O_D$, and using the AIII shadow-channel eigenvalues (which are readily obtained from Eq.~\eqref{eq:aiiialpha})
\begin{align}
    \lm_O  &= \frac{-s^4   - 2s^2(d-2)   + d^2(d^2 + 2 d - 4)}{d^2(d-1)(d+1)(d+3)}\\
    \lm_D  &= \frac{s^4   + 2s^2(d-2)   + d(d^2 + 3 d - 3)}{d(d-1)(d+1)(d+3)}   \,,
\end{align}
so that $X=\lambda_D^{-1}O_D+\lambda_O^{-1}O_O$, our expression becomes
\begin{align}
\expect[\hat{o}_0^2]&=\lambda_D^{-2}(c_1+c_2)\tr(O_D^2)+\lambda_O^{-2}c_1\tr(O_O^2)+\lambda_D^{-2}(2c_1+6c_2+c_3)\tr(\rho O_D^2)\\
&\hspace{5mm}+2\lambda_O^{-2}\Big(c_1\tr(\rho O_O^2)+c_2\tr(\mca_\mcw(\rho)O_O^2)\Big)+2\lambda_D^{-1}\lambda_O^{-1}(c_1+c_2)\tr\!\big(\rho\{O_D,O_O\}\big).
\end{align}
Let us consider two asymptotic regimes. First, when $s\in\mco(1)$ we have $c_1\sim d^{-2},\, c_2\sim d^{-3},\, c_3\sim d^{-4},\, \lambda_O\sim\lambda_D\sim d^{-1}$, whence the leading order behaviour is
\begin{equation}\label{eq:aiiilils}
    \expect[\hat{o}^2_0] \underset{s\in\mco(1)}{\sim} \tr[O_0^2] + 2\tr[\rho O_0^2] + \mco(d^{-1}),
\end{equation}
approximately reproducing that of  unitary shadows~\cite{huang2020predicting}. When $s=cd$ for a fixed non-zero $c\in(-1,1)$, on the other hand, we find
\begin{equation}
    c_1\sim \frac{(1-c^2)^2(2c^2+1)}{d^2},\, c_2\sim \frac{c^4(1-c^2)}{d},\, c_3\sim c^{6},\, \lambda_O\sim\frac{1-c^4}{d},\,\lambda_D\sim c^{4}\,,
\end{equation}
which can change the scaling significantly. Intuitively, this range  should strongly favour the estimation of observables concentrated on the diagonal, which we can see; indeed in the extreme case $O_O=0$ we have
\begin{equation}
    \expect[\hat{o}^2_0] \underset{c\in\mco(1)}{\sim} \frac{(1-c^2)\tr[O_D^2]}{c^4d} + \frac{\tr[\rho O_D^2]}{c^2} \le c^{-4} d^{-1}\|O_D\|_2^2 +c^{-2}\|O_D\|_\infty^2,
\end{equation}
where we have used Holder's inequality. As the (squared) spectral norm is generically suppressed by a factor of $d$ compared to the (squared) 2-norm, this constitutes a significant decrease in the variance from Eq.~\eqref{eq:aiiilils}.

\subsubsection{BDI}
\noindent
To bound the variance of the BDI protocol we can tell a similar story. In the analogous equation to Eq.~\eqref{eq:aiii16}, the combination of Eqs.~\eqref{eq:a19} and~\eqref{eq:tri} lead to 31 possible delta contractions (as opposed to the 16 of Eq.~\eqref{eq:aiii16}), but conceptually there is no real difference. 
With $X=\mcm^{-1}(O_0)$, $D=\mca_\mcw(X)$, and $\widetilde Y = (Y+Y^\mst)/2$,  carrying out the symbolic Weingarten calculus straightforwardly yields
\begin{align}
    \mbe[\hat{o}_0^2] &= \sum_w\int_{V\sim \mu_{\rm BDI}}\tr\left[\rho V^\dagger\Pi_wV\right]\tr\left[O_{ 0} \mcm^{-1}\left(V^\dagger\Pi_wV\right)\right] ^2  \\
     &= \sum_w\int_{V\sim \mu_{\rm BDI}}    \tr\left[\left(\rho\otimes  X \otimes X \right)\left(V^{\dagger\otimes 3}\Pi_w^{\otimes 3}V^{\otimes 3}\right) \right] \\
     &= c_1\Big(   2\tr\big[\widetilde X^2]  +8\tr\big[\widetilde\rho \widetilde X^2\big]\Big)+c_2\Big(4\tr \big[\mathcal{A}_{\mathcal W}(\rho) \widetilde X^2\big]+4\tr \big[\widetilde\rho \{D,\widetilde X\}\big]+\tr[D^2]\Big) + c_3 \tr[\widetilde\rho D^2]     
\end{align}
where
\begin{align}
    c_1 &= \frac{
(d-s)(d+s)(d-s+4)(d+s+4)\big(d^3+19d^2+2ds^2+82d+12s^2-48\big)}{d(d-1)(d+1)(d+2)(d+3)(d+4)(d+6)(d+8)(d+10)} \\
    c_2&=\frac{(d-s)(d+s)\big(3d^3+6d^2s^2+24d^2+ds^4+44ds^2-12d+6s^4-96\big)}{d(d-1)(d+1)(d+2)(d+3)(d+6)(d+8)(d+10)} \\
    c_3&= \frac{15d^3+45d^2s^2+15ds^4+30ds^2-60d+s^6+10s^4-56s^2}{d(d-1)(d+1)(d+2)(d+3)(d+10)}\,.
\end{align}
We may continue to proceed as in the AIII case, splitting $\widetilde O_0$ into diagonal and off-diagonal parts, $\widetilde O_D = \mca_\mcw(\widetilde O_0),\, \widetilde O_O = \widetilde O_0-\widetilde O_D$, and using the similarly readily obtained BDI shadow-channel eigenvalues,
\begin{align}
    \lm_O^{\rm BDI}  &= \frac{2 (d^2-s^2)(d^2+6d+s^2-4)}{d(d-1)(d+1)(d+2)(d+6)}\\
    \lm_D^{\rm BDI}  &= \frac{s^4-4s^2+2d^3+15d^2-12+d(6s^2-8)}{(d-1)(d+1)(d+2)(d+6)}  \,,
\end{align}
we obtain
\begin{align}
\mathbb E[\hat o_0^2]&=\lambda_D^{-2}(2c_1+c_2)\tr(O_D^2)+2\lambda_O^{-2}c_1\tr\big(\widetilde O_O^2\big)+\lambda_D^{-2}(8c_1+12c_2+c_3)\tr(\rho O_D^2)
\nonumber\\
&\hspace{5mm}+4\lambda_O^{-2}\Big(2c_1\tr\big(\rho \widetilde O_O^2\big)+c_2\tr\big (\mathcal A_{\mathcal W}(\rho)\widetilde O_O^2\big)\Big)+4\lambda_D^{-1}\lambda_O^{-1}(2c_1+c_2)\tr \big(\rho\{O_D,\widetilde O_O\}\big).
\end{align}
We again consider two asymptotic regimes. First, when $s\in\mco(1)$ we have $c_1\sim d^{-2},\, c_2\sim d^{-3},\, c_3\sim d^{-4},\, \lambda_O\sim\lambda_D\sim d^{-1}$, whence the leading order behaviour is
\begin{equation}
    \expect[\hat{o}^2_0] \underset{s\in\mco(1)}{\sim} \frac 12 \tr[\widetilde O_0^2] + 2\tr[\rho \widetilde  O_0^2] + \mco(d^{-1}),
\end{equation}
approximately reproducing that of orthogonal shadows~\cite{west2025real}. When $s=cd$ for fixed non-zero $c\in(-1,1)$, on the other hand, we find
\begin{equation}
    c_1\sim \frac{(1-c^2)^2(2c^2+1)}{d^2},\, c_2\sim \frac{c^4(1-c^2)}{d},\, c_3\sim c^{6},\, \lambda_O\sim\frac{1-c^4}{d},\,\lambda_D\sim c^{4}\,,
\end{equation}
exactly as in the AIII case. When $O_O=0$ we have
\begin{equation}
    \expect[\hat{o}^2_0] \underset{c\in\mco(1)}{\sim} \frac{(1-c^2)\tr[O_D^2]}{dc^4} +  \frac{\tr[\rho O_D^2]}{c^2} \le c^{-4}d^{-1} \|O_D\|_2^2 + c^{-2}\|O_D\|_\infty^2,
\end{equation}
again as in the case of AIII.

\section{The AI case in detail}\label{sec:ai}
We now turn to filling in some more lower-level details of a channel calculation, focusing on the symmetric space AI. 
As discussed in Appendix~\ref{sec:ss}, there are (at least) two ways to integrate over a compact symmetric space $G/K$: (i) directly, using the symmetric space Weingarten calculus of Matsumoto~\cite{matsumoto2013weingarten}, or (ii) indirectly, by transforming to an integral (of doubled order) on the ambient group $G$. Let us begin by illustrating the latter approach in a simple setting:
\begin{restatable}{lem}{ai1}[Lemma 3 of Ref.~\cite{west2026average}]\label{lem:ai1}
For any $A\in\endh$,
\begin{equation}
    \expect_{V\sim \mu_{\rm AI}}VAV^\dagger =  \frac{\tr[A]\id + A^\mst}{d+1}
\end{equation}
\end{restatable}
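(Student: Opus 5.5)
Since the paper says this lemma illustrates the ``lift to the ambient group'' approach, I will follow that route. By the sampling recipe of Appendix~\ref{sec:ss}, $V=\sg(U)^{-1}U=U^\mst U$ with $U$ Haar on $\mbu(d)$. Hence
\begin{equation}
\expect_{V\sim\mu_{\rm AI}} VAV^\dagger=\expect_{U\sim\mbu(d)} U^\mst U A U^\dagger \overline{U}.
\end{equation}
This is a degree-$(2,2)$ polynomial in $(U,\overline U)$. The cleanest evaluation is to integrate in two stages, using only first-order unitary twirls rather than full second-order Weingarten calculus.

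The key observation is that $U^\mst = \overline{U^\dagger}$, so the expression contains the pair $U$, $U^\dagger$ sandwiching $A$, and also the pair $\overline{U}^\dagger=U^\mst$, $\overline U$ sandwiching $UAU^\dagger$. I will vectorise. Using $\mathrm{vec}(XAY)=(Y^\mst\otimes X)\mathrm{vec}(A)$, the map $A\mapsto U^\mst U A U^\dagger \overline U$ becomes the operator
\begin{equation}
(U^\dagger\overline U)^\mst\otimes U^\mst U=(U^\mst\otimes U^\mst)(U\otimes U)
\end{equation}
on $\mbc^d\otimes\mbc^d$, up to the routine transpositions. The average of this operator is then a mixed second-moment object of the form $\expect_U (U^\mst)^{\otimes 2}\,\cdot\,U^{\otimes 2}$. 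Alternatively, I can simply write the entries: $[VAV^\dagger]_{ij}=\sum U_{ki}U_{kl}A_{lm}\overline{U_{nm}}\,\overline{U_{nj}}$. The average is then $\expect[U_{ki}U_{kl}\overline{U_{nm}}\,\overline{U_{nj}}]$, which the second-order unitary Weingarten formula gives as
\begin{equation}
\sum_{\sigma,\tau\in S_2}\delta_\sigma\,\delta_\tau\,{\rm Wg}(\sigma^{-1}\tau,d),
\end{equation}
with ${\rm Wg}(e)=1/(d^2-1)$ and ${\rm Wg}((12))=-1/(d(d^2-1))$.

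Next I contract the indices. The row pairing $(k,k)\leftrightarrow(n,n)$ forces $k=n$ under both the identity and the swap, and the sum over $k$ produces a factor of $d$ or $1$ depending on the pattern. The column pairings, between $(i,l)$ and $(m,j)$, give either $\delta_{im}\delta_{lj}$ or $\delta_{ij}\delta_{lm}$. Carefully tracking the four $(\sigma,\tau)$ terms produces coefficients of $\sum_{l,m}\delta_{im}\delta_{lj}A_{lm}=A_{ji}=(A^\mst)_{ij}$ and of $\delta_{ij}\tr A$. The row contraction of the same pairing with itself gives $d^2$ (both sums free), while the cross terms give $d$. Each coefficient therefore becomes $(d^2\cdot\text{Wg}(e)\cdot\ldots)$, and should collapse to $(d^2-1)/(d^2-1)\cdot 1/(d+1)$-type expressions. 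I expect both coefficients to equal $\frac{d-1}{d^2-1}=\frac1{d+1}$.

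Finally I will sanity-check the result. Trace preservation requires $\tr[VAV^\dagger]=\tr A$, and $(d\tr A+\tr A)/(d+1)=\tr A$ holds. Taking $A=\id$ gives $\id$, as it must.

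The main obstacle is the bookkeeping in the index contraction. It is easy to swap which delta pattern pairs with which factor of $d$, and so obtain $A$ instead of $A^\mst$. I will guard against this by checking a special case: for symmetric versus antisymmetric $A$, the $K=\mbo(d)$-equivariance (Lemma~\ref{lem:equi}) together with Eq.~\eqref{eq:aibaby} forces the answer to take the form $a\tr[A]\id+bA_{\rm sym}+cA_{\rm anti}$. A quick test with $A=e_1e_2^\mst-e_2e_1^\mst$ then fixes the sign, giving $c=-b$ and hence the transpose.
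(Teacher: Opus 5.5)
Your route is the same as the paper's: replace $V$ by $U^\mst U$ with $U$ Haar on $\mbu(d)$, then evaluate a second-order unitary Weingarten integral. Your target formula and sanity checks are also right. The problem is the one substantive step, the index contraction: as you describe it, it does not give the coefficient you then assert.

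In $\mathbb{E}[U_{ki}U_{kl}\overline{U_{nm}}\,\overline{U_{nj}}]$, both $U$'s carry row index $k$ and both $\overline U$'s carry row index $n$. So \emph{every} $\sigma\in S_2$ gives the same row factor $\delta_{kn}\delta_{kn}=\delta_{kn}$, which sums to $d$. There is no term with both sums free giving $d^2$, and no term giving $1$. With your counting, each coefficient would be $d^2\,\mathrm{Wg}(e)+d\,\mathrm{Wg}((12))=1$. The output would then be $\tr[A]\id+A^\mst$, which fails the trace-preservation check you propose; the factor $1/(d+1)$ you attach is not derived.

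The correct bookkeeping is as follows. For each column pattern $\tau$, the coefficient is $d\sum_{\sigma\in S_2}\mathrm{Wg}(\sigma^{-1}\tau)=d\bigl(\frac{1}{d^2-1}-\frac{1}{d(d^2-1)}\bigr)=\frac{1}{d+1}$. The pattern $\tau=e$ gives $\delta_{im}\delta_{lj}$, hence $A_{ji}=(A^\mst)_{ij}$. The pattern $\tau=(12)$ gives $\delta_{ij}\delta_{lm}$, hence $\delta_{ij}\tr[A]$. Done this way, the transpose comes out directly and the $\mbo(d)$-equivariance sign check is unnecessary. In any case, testing a single antisymmetric $A$ would fix $c$ but not the ratio $c/b$.

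There are two smaller slips. First, the \emph{two-stage, first-order twirl} idea cannot work, because the same $U$ appears in both stages; the integral is genuinely second order, which is how you end up treating it. Second, $(U^\dagger\overline U)^\mst\otimes U^\mst U=\overline V\otimes V$, which is not $(U^\mst\otimes U^\mst)(U\otimes U)=V\otimes V$.

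The paper avoids the index bookkeeping altogether. Using $V\sim UU^\mst$, it writes $VAV^\dagger=\tr_2\big[(\id\otimes A^\mst)\,U^{\otimes 2}(d\lvert\Phi\rangle\langle\Phi\rvert)U^{\dagger\otimes 2}\big]$, with $\lvert\Phi\rangle$ the maximally entangled state. The only integral needed is then the twirl of a state in the symmetric subspace, $\mathbb{E}_U U^{\otimes 2}(d\lvert\Phi\rangle\langle\Phi\rvert)U^{\dagger\otimes 2}=(\id+\mathbb{S})/(d+1)$. The partial trace then gives $\tr[A]\id+A^\mst$ at once.
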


\begin{proof}
We use that $V\sim {\rm AI}$ is identically distributed to the product $\sg_{\rm AI}(U)\ad U=(U^*)\ad U =U^\mst U$, where $U\sim \mbu$, and indeed (by the invariance of the Haar measure on the unitary group under transposition) to \(UU^T\). Making this substitution  yields (for a review of the graphical notation see e.g. Ref.~\cite{mele2023introduction})
\begin{align}
        \expect_{V\sim {\rm AI}}VAV^\dagger &=  \expect_{U\sim \mbu}\left[U U^\mst A (U U^\mst)\ad\right] \\
        &=\expect_{U\sim \mbu}\ \ \img[1.6cm]{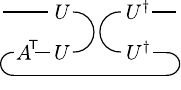}\\
         &= \tr_2\big[(\id\otimes A^T)\expect_{U\sim \mbu}\big[U^{\otimes 2}(d\ketbra{\Phi})(U\ad)^{\otimes 2} \big]\big]\label{eq:vec_approach}\\
         &= \frac{1}{d+1}\tr_2\big[(\id\otimes A^T) (\id+\mbs)\big]\\
        &=\frac{\tr[A]\id + A^\mst}{d+1}
    \end{align}
where the twirl over the unitary group of   \(\ketbra{\Phi}\), with \(\ket{\Phi}\) denoting the Bell state, is readily taken by  a simple application of the Weingarten calculus.
\end{proof}

Lemma~\ref{lem:ai1} also serves to show that, as previously claimed (and in contrast to the familiar case of twirling over a group) $\mct^{(1)}_{\rm AI}$ is \textit{not} a projector, as its eigenspectrum evidently contains $\pm 1/(d+1) \notin \{0,1\} $. Now, as we have previously seen, in order to characterise the classical shadow protocol given by a pair $(\mce,\mcw)$ we need to evaluate the twirls $\mct^{(k)}_{\mce} \left( \sum_{w\in\mcw} \Pi_w\tk \right)$ for $k=2,3$. In the present case, this will be facilitated by the following useful lemma:

\begin{restatable}{lem}{aig}\label{lem:aig}
\begin{equation}\label{eq:aigeneral}
    \int_{V\sim\mu_{\rm AI}}(V\Pi_wV^\dagger)\tk = \sum_{\substack{ \boldsymbol{a}\boldsymbol{b}\boldsymbol{i}\boldsymbol{o} }}  \int_{U\sim\hspace{0.5mm}\mbu} \bra{\boldsymbol{a}\boldsymbol{a}} U^{\ot 2k}
    \left(\Pi_w\tk \ot \ketbra{\boldsymbol{i}}{\boldsymbol{o}}\right) (U\ad)^{\ot 2k} \ket{\boldsymbol{b}\boldsymbol{b}}\ketbra{\boldsymbol{i}}{\boldsymbol{o}}
\end{equation}
where $\boldsymbol{a},\boldsymbol{b},\boldsymbol{i},\boldsymbol{o}\in \mbz_d^k$.
\end{restatable}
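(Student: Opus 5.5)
We would follow the proof of Lemma~\ref{lem:ai1}, now applied to $k$ tensor copies. First we replace $V\sim\mu_{\rm AI}$ by $UU^\mst$ with $U\sim\mbu$. This is justified exactly as before: $V$ is distributed as $\sg_{\rm AI}(U)\ad U=U^\mst U$, and Haar measure is invariant under transposition. Then
\begin{equation*}
(V\Pi_wV\ad)\tk=(UU^\mst)\tk\,\Pi_w\tk\,(U^*U\ad)\tk .
\end{equation*}
The whole task is to rewrite the right-hand side so that only $U^{\ot 2k}$ and its adjoint appear. Writing the integral this way reduces it to a (doubled-order) twirl over $\mbu$.

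The key tool is the vectorisation or transpose trick. For a single copy we expand the operator in matrix units,
\begin{equation*}
UU^\mst\,\Pi_w\,U^*U\ad=\sum_{i,o}\ketbra{i}{o}\,\bra{i}UU^\mst\Pi_wU^*U\ad\ket{o}.
\end{equation*}
We then rewrite the scalar $\bra{i}U U^\mst \Pi_w U^* U\ad\ket{o}$ using $\bra{x}U^\mst\ket{y}=\bra{y}U\ket{x}$. This moves every factor onto a doubled space. Concretely, we insert resolutions of the identity $\sum_a\ketbra{a}$ between $U$ and $U^\mst$, and between $U^*$ and $U\ad$. The transposed factors then become entries of a second copy of $U$ contracted against the same index $a$ (respectively $b$). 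This produces $\bra{a}\bra{a}(U\ot U)(\Pi_w\ot\ketbra{i}{o})(U\ad\ot U\ad)\ket{b}\ket{b}$, summed over $a,b$.

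The step needing care is this index bookkeeping. We must check which leg carries $\Pi_w$ and which carries the matrix unit $\ketbra{i}{o}$ after transposition. We must also check that the conjugations arrange so that the paired indices appear as $\ket{\boldsymbol a\boldsymbol a}$ on the left and $\ket{\boldsymbol b\boldsymbol b}$ on the right; in the single-copy case the $\ket{\boldsymbol a\boldsymbol a}$ pairing is the unnormalised Bell vector $\sqrt d\ket\Phi$. We would verify this on the $k=1$ case diagrammatically, matching the diagram of Eq.~\eqref{eq:vec_approach} but with $\Pi_w$ inserted rather than traced against $A^\mst$.

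Finally we take the $k$-fold tensor product. All indices factor across copies, so $\boldsymbol a,\boldsymbol b,\boldsymbol i,\boldsymbol o\in\mbz_d^k$ become multi-indices. After regrouping tensor factors, the $k$ copies of $U\ot U$ combine into $U^{\ot 2k}$, ordering the first $k$ legs for $\Pi_w\tk$ and the last $k$ for $\ketbra{\boldsymbol i}{\boldsymbol o}$. The reordering is a fixed permutation of tensor legs. It commutes with $U^{\ot 2k}$ and is absorbed consistently into the bra $\bra{\boldsymbol a\boldsymbol a}$ and the ket $\ket{\boldsymbol b\boldsymbol b}$. Linearity then lets us exchange the finite sum with the expectation, which gives Eq.~\eqref{eq:aigeneral}.
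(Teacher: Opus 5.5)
Your argument is the paper's proof written in indices rather than diagrams. The paper establishes the lemma graphically with the same transpose/wire-bending trick, spelled out for $k=2$ with general $k$ regarded as clear, which is what your copy-by-copy factorisation and leg reordering does. There is one bookkeeping correction: with $V=UU^\mst$, your resolution-of-identity step actually yields $\sum_{a,b}\bra{w\,i}U^{\ot 2}\ket{aa}\bra{bb}(U\ad)^{\ot 2}\ket{w\,o}$, with the Bell vectors fed \emph{into} the twirl as in Eq.~\eqref{eq:vec_approach}. That is your bracket with $U$ replaced by $U^\mst$, so it is not equal pointwise to your claimed expression. You would need transposition invariance once more. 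More cleanly, keep $V=U^\mst U$: then $\bra{i}U^\mst\ket{a}\bra{a}U\ket{w}=\bra{aa}U^{\ot 2}\ket{w\,i}$ and $\bra{w}U\ad\ket{b}\bra{b}U^*\ket{o}=\bra{w\,o}(U\ad)^{\ot 2}\ket{bb}$, which give Eq.~\eqref{eq:aigeneral} pointwise in $U$.
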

\begin{proof}
This is readily seen in the graphical notation~\cite{mele2023introduction}; e.g. for $k=2$ we have

\begin{align}
\img[1.cm]{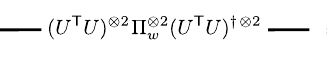}&=\img[1.1cm]{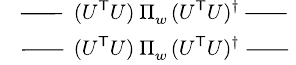}\\
&=\img[1.7cm]{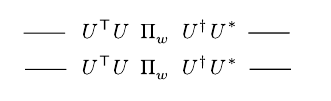}\\
&=\img[3.7cm]{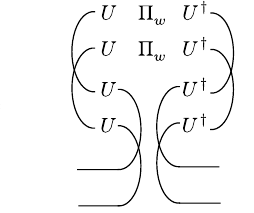}
\end{align}
\noindent
whence the generalisation to arbitrary $k$ is  fairly clear.
\end{proof}

\noindent
At this point, we can reduce the evaluation of the AI shadow channel to a
direct integral over the unitary group: 
\begin{align}
\mcm_{{\rm AI},\mcw}(\rho)&=\sum_{w\in\mcw}  \tr_1\left[\left(\rho \otimes \id\right) \int_{V\sim \rm AI} V^{\otimes 2}\Pi_w^{\otimes 2}V^{\dagger\otimes 2}\right]\\
&=\sum_{w\in\mcw}  \tr_1\left[\left(\rho \otimes \id\right) \int_{U\sim\hspace{0.5mm}\mbu(d)} (U^\mst U)^{\otimes 2}\Pi_w^{\otimes 2}(U^\mst U)^{\dagger\otimes 2}\right]\\
&=\sum_{\substack{ \a \b \ga \d \\ ijkl}}\sum_{w}  \tr_1\left[\left(\rho \otimes \id\right) \int_{U\sim\hspace{0.5mm}\mbu(d)} \bra{\a\b\a\b} U^{\ot 4} \ketbra{wwij}{wwkl} (U\ad)^{\ot 4} \ket{\ga\d\ga\d}\ketbra{ij}{kl} \right]\label{eq:b23}\\
&=\sum_{\substack{ \a \b \ga \d \\ ijkl}}\sum_{w}  \tr_1\left[\left(\rho \otimes \id\right) \sum_{\sg,\tau\in S_4 }{\rm Wg}_d(\tau^{-1}\sg) \braket{\a\b\a\b|\tau}{\ga\d\ga\d}  \sbraket{wwkl|\sg^{-1}}{wwij}  \ketbra{ij}{kl} \right]\\
&=\sum_{\substack{ ijkl}}  \tr_1\left[\left(\rho \otimes \id\right) \left[\frac{(d^{2}+3d-2) (\delta_{ik} \delta_{jl} + \delta_{il} \delta_{jk}) + (2 d+2) \delta_{ij} \delta_{ik} \delta_{il}   }{d (d+1)(d+3)}\right] \ketbra{ij}{kl} \right]\\
&=\tr_1\left[\left(\rho \otimes \id\right) \left[\frac{(d^{2}+3d-2) (\id+\mbs)   }{d (d+1)(d+3)}\right]  \right]+\frac{ (2 d+2)\mca_\mcw(\rho)  }{d (d+1)(d+3)} \\
&=\frac{(d^{2}+3d-2)(\tr[\rho]\id+\rho)}{d(d+1)(d+3)}+\frac{ 2\mca_\mcw(\rho)  }{d (d+3)}\\
&=\left(\mcp_{\id} + \frac{d+5}{(d+1)(d+3)}\mcp_{\rm traceless-diag} + \frac{d^2+3d-2}{d(d+1)(d+3)}\mcp_{\rm off-diag} \right)(\rho)\label{eq:ai_res}\\
&=\left(1-\frac{2}{d(d+3)}\right)\mcd_{d/(d+1)}(\rho)+\frac{ 2\mca_\mcw(\rho)  }{d (d+3)}
\end{align}
where to obtain Eq.~\eqref{eq:b23} we have used Lemma~\ref{lem:aig}, $\mcd_p$ is a depolarising channel of strength $p$, and $\mca_\mcw$ is as usual a completely dephasing channel, $\mca_\mcw(\ketbra{a}{b})=\delta_{a,b}\ketbra{a}$. \\

Alternately, we can note that by Theorem~\ref{thm:thmmain} it remains only to determine the coefficient $\alpha_{\text{AI}}$ appearing in the AI shadow channel in Eq.~\ref{AI_form}. For this purpose, it suffices to evaluate the channel on a single operator, which we take to be $E_{11}=\ketbra{1}$ and consider the $(1,1)$-matrix element of the output:
\begin{align}
\langle 1|\mcm_{{\rm AI},\mcw}(E_{11})|1\rangle&= \sum_{w\in \mcw}\expect_{V\sim \rm AI}|\langle 1|V|w\rangle|^4\\
&=\sum_{w\in \mcw}\expect_{V\sim \rm AI}|V_{1w}|^4\\
&=\expect_{V\sim \rm AI}|V_{11}|^4+(d-1)\expect_{V\sim \rm AI}|V_{12}|^4,
\end{align}
where in the last step we used permutation invariance of the AI ensemble, which implies that all off-diagonal entries (here $w\neq 1$) in a fixed row are identically distributed. Using Matsumoto’s formula for AI~\cite{matsumoto2013weingarten}, we obtain
\begin{align}
\expect_{V\sim \rm AI}|V_{11}|^4=\frac{8}{(d+1)(d+3)},\qquad  \expect_{V\sim \rm AI}|V_{12}|^4=\frac{2}{d(d+3)}.
\end{align}
Substituting these values into the left-hand side of Eq.~\ref{AI_form}, and using $\langle1|\mcm_{\mbu,\mcw}(E_{11})|1\rangle=2/(d+1)$ and $\langle 1|\mca_\mcw(E_{11})|1\rangle=1$ for the right-hand side, the comparison of both sides yields
\begin{align}
    \alpha_{\rm AI}=\frac{2}{d(d+3)}.
\end{align}
Hence, we finally obtain 
\begin{align}
\mcm_{{\rm AI},\mcw}(\rho)=\left(1-\frac{2}{d(d+3)}\right)\mcm_{\mbu,\mcw}(\rho)+\frac{ 2\mca_\mcw(\rho)  }{d (d+3)}\,.
\end{align}

Finally, we remark that the expression~\eqref{eq:ai_res} for the shadow channel is compatible with the result of Lemma~\ref{lem:stab}. To see this, note that  $H:= \mbo(d)\cap  N _\mcw \cong \mbz_2 \wr S_d $ is the group of ``signed permutation matrices'' (i.e. those matrices that (when expressed  with respect to $\mcw$) are identically zero except for  exactly one $\pm1$ in each row/column).  By Lemma~\ref{lem:stab}, we know that $\mcm_{\rm AI}$ is equivariant with respect to $H$, and is therefore block-diagonal with respect to the decomposition of $\mcl$ into pairwise-non-isomorphic irreps under ${\rm Ad}_H$, which one sees to be: the span of the identity, the traceless diagonal matrices, and the symmetric / antisymmetric off-diagonal subspaces, so that $\mcm_{{\rm AI},\mcw}$ acts as a scalar on each of them (the irreducibility and pairwise-non-isomorphic nature of these   subspaces  follows  from a ``frame potential argument''~\cite{mele2023introduction}). Interestingly,  the above direct calculation shows that the scalars corresponding to the symmetric and antisymmetric off-diagonal subspaces are in fact equal, a degeneracy which is invisible to this  argument.

\end{document}